\def\withcolors{1}
\def\withnotes{1}
\newtheorem{Theorem*}{Theorem}
\newtheorem{Claim*}[Theorem]{Claim}
\newtheorem{CounterExample*}{$\overline{\hbox{\bf Example}}$}
\newtheorem{Example*}[Theorem]{Example}
\newtheorem{Intuition*}[Theorem]{Intuition}
\newtheorem{Joke*}[Theorem]{Joke}
\newtheorem{Lemma*}[Theorem]{Lemma}
\newtheorem{Open problem}[Theorem]{Open problem}
\newtheorem{Question*}[Theorem]{Question}
\newtheorem{theorem}{Theorem}
\newtheorem{proposition}[theorem]{Proposition}
\newtheorem{lemma}[theorem]{Lemma}
\newtheorem{definition}[theorem]{Definition}
\def \bSubexa    {\begin{subexa}}
\newcommand{\ignore}[1]{}
\newcommand{\EE}{\mathbb{E}}
\newcommand{\RR}{\mathbb{R}}
\def \cC     {{\cal C}}
\def \cF     {{\cal F}}
\def \cL     {{\cal L}}
\def \cN     {{\cal N}}
\newcommand{\ed}{\stackrel{\mathrm{def}}{=}}
\def\ignore#1{}
\newcommand{\bi}{\begin{itemize}}
\newcommand{\ei}{\end{itemize}}
\def\orpro{\mathop{\mathchoice
   {\vee\kern-.49em\raise.7ex\hbox{$\cdot$}\kern.4em}
   {\vee\kern-.45em\raise.63ex\hbox{$\cdot$}\kern.2em}
   {\vee\kern-.4em\raise.3ex\hbox{$\cdot$}\kern.1em}
   {\vee\kern-.35em\raise2.2ex\hbox{$\cdot$}\kern.1em}}\limits}
\def\andpro{\mathop{\mathchoice
 {\wedge\kern-.46em\lower.69ex\hbox{$\cdot$}\kern.3em}
 {\wedge\kern-.46em\lower.58ex\hbox{$\cdot$}\kern.25em}
 {\wedge\kern-.38em\lower.5ex\hbox{$\cdot$}\kern.1em}
 {\wedge\kern-.3em\lower.5ex\hbox{$\cdot$}\kern.1em}}\limits}
\def\simge{\mathrel{%
   \rlap{\raise 0.511ex \hbox{$>$}}{\lower 0.511ex \hbox{$\sim$}}}}
\def\simle{\mathrel{
   \rlap{\raise 0.511ex \hbox{$<$}}{\lower 0.511ex \hbox{$\sim$}}}}
\providecommand{\email}[1]{\href{mailto:#1}{\nolinkurl{#1}\xspace}}
  \newcommand{\newest}[1]{{\color{orange} {#1}}} 
  \newcommand{\acolor}[1]{{\color{purple}#1}} 
  \newcommand{\bcolor}[1]{{\color{cyan}#1}} 
  \newcommand{\newest}[1]{{{#1}}}
  \newcommand{\acolor}[1]{{#1}}
  \newcommand{\bcolor}[1]{{#1}}
  \newcommand{\anote}[1]{\par\acolor{\textbf{JA: }\sf #1}} %
  \newcommand{\bnote}[1]{\par\bcolor{\textbf{SB: }\sf #1}} %
  \newcommand{\wnote}[1]{\par\scolor{\textbf{AW: }\sf #1}} %
  \newcommand{\anote}[1]{}
  \newcommand{\bnote}[1]{}
  \newcommand{\wnote}[1]{}
\newcommand{\eps}{\varepsilon}
\newcommand{\pcnd}[3][]{
\ifthenelse{\equal{#1}{}}{p\left(#2 \middle| #3\right)}{p_{#1} \left(#2 \middle| #3 \right)}
}
\newcommand{\qcnd}[3][]{
\ifthenelse{\equal{#1}{}}{q\left(#2 \middle| #3\right)}{q_{#1} \left(#2 \middle| #3 \right)}
}
\newcommand{\qjnt}[3][]{
\ifthenelse{\equal{#1}{}}{q\left(#2, #3\right)}{q_{#1} \left(#2 , #3 \right)}
}
\newcommand{\pjnt}[3][]{
\ifthenelse{\equal{#1}{}}{p\left(#2, #3\right)}{p_{#1} \left(#2,#3 \right)}
}
\newcommand{\qmrg}[2][]{
\ifthenelse{\equal{#1}{}}{q\left(#2 \right)}{q_{#1} \left(#2 \right)}
}
\newcommand{\pmrg}[2][]{
\ifthenelse{\equal{#1}{}}{p\left(#2 \right)}{p_{#1} \left(#2 \right)}
}
\DeclarePairedDelimiterX{\infdivx}[2]{(}{)}{%
  #1\|#2%
}
\newcommand{\expc}[2][]{\ifthenelse{\equal{#1}{}} {\mathbb{E}\left[ #2 \right] }{\mathbb{E}_{#1} \left[ #2 \right]}}
\newcommand{\ipdim}{d}
\newcommand{\tr}{\text{tr}}
\newcommand{\cov}{\bm{K}}
\newcommand{\dcov}{\bm{\Lambda}}
\newcommand{\encnew}{\bm{W}}
\newcommand{\encvec}{\bm{w}}
\newcommand{\decnew}{\bm{T}}
\newcommand{\ort}{\bm{Q}}
\newcommand{\ortvec}{\bm{q}}
\newcommand{\ide}{\bm{I}}
\newcommand{\encv}[2][]{\ifthenelse{\isempty{#1}}{\bm{w}_{#2}}{w_{#1 #2}}}
\newcommand{\rotencv}[2][]{\ifthenelse{\isempty{#1}}{\bm{y}_{#2}}{y_{#1 #2}}}
\DeclareMathOperator*{\argmin}{arg\,min}
\newcommand{\data}{\bm{x}}
\newcommand{\latent}{\bm{y}}
\newcommand{\diagcov}{\bm{\Sigma}}
\newcommand{\distortion}{D}
\newcommand{\scalePBA}{\alpha}
\newcommand{\noise}{\bm{\eps}}
\newcommand{\noisevar}{\sigma^2}
\newcommand{\rateproxy}{\rho}
\newcommand{\eigvec}{\bm{u}}
\newcommand{\eigmat}{\bm{U}}
\newcommand{\diagenc}{\widetilde{\bm{W}}}
\newcommand{\var}{\bm{y}}
\newcommand{\vort}{\bm{Y}}
\newcommand{\send}{r}
\newcommand{\scaling}{s}
\newcommand{\scalemat}{\bm{S}}
\newcommand{\genmat}{\bm{M}}
\newcommand{\genmatvec}{\bm{m}}
\newcommand{\genvec}{\bm{v}}
\newcommand{\lag}{\cL}
\newcommand{\setD}{\left \lbrace D_i \right \rbrace_{i=1}^{d}}
\newcommand{\setmu}{\left \lbrace \mu_i  \right \rbrace_{i=1}^{d}}
\newcommand{\orthvar}{\mu}
\title{Principal Bit Analysis: \\
Autoencoding with Schur-Concave Loss}
\author{
Sourbh Bhadane\thanks{This research was supported by the US National Science Foundation
under grants CCF-2008266, CCF-1934985, CCF-1617673, CCF-1846300, CCF-1815893 and the US
Army Research Office under grant W911NF-18-1-0426.	
}\\
Cornell University\\
\tt{snb62@cornell.edu}\\
\and
Aaron B. Wagner\footnotemark[1]\\
Cornell University\\
\tt{wagner@cornell.edu}\\
\and
Jayadev Acharya\footnotemark[1]\\
Cornell University\\
\tt{acharya@cornell.edu}
}
\begin{document}
\maketitle
\begin{abstract}
We consider a linear autoencoder in which the latent variables
    are quantized, or corrupted by noise, and the constraint
    is Schur-concave in the set of latent variances.
    Although finding the optimal encoder/decoder pair 
    for this setup is a nonconvex optimization problem, we show that 
    decomposing the source into its principal components is optimal. If 
    the constraint is strictly Schur-concave and the empirical
	  covariance matrix has only simple eigenvalues, then any optimal
	  encoder/decoder must decompose the source in this way. As one application,
     we consider a strictly Schur-concave constraint that estimates the 
      number of bits needed to represent the latent variables under 
      fixed-rate encoding, a setup that
    we call \emph{Principal Bit Analysis (PBA)}.  This yields a practical,
    general-purpose, fixed-rate compressor that outperforms 
    existing algorithms. As a second application, we show that a
      prototypical autoencoder-based variable-rate compressor is guaranteed
	  to decompose the source into its principal components.

\end{abstract}

\section{Introduction}

\emph{Autoencoders} are an effective method for representation learning
and dimensionality reduction. Given a centered dataset $\data_1, \data_2, \ldots, \data_n
\in \mathbb{R}^d$ (i.e., $\sum_i \data_i = 0$),
an autoencoder (with \emph{latent dimension} $k \le d$)
consists of an \emph{encoder} $f:
\mathbb{R}^d \mapsto \mathbb{R}^k$ and a \emph{decoder} $g: \mathbb{R}^k \mapsto
\mathbb{R}^d$. The goal is to select $f$ and $g$ from prespecified
classes $\mathcal{C}_f$ and $\mathcal{C}_g$ respectively such that if a random point $\data$ is picked from the data set then $g(f(\data))$ is close to $\data$ in some sense, for example in mean squared error. 
If $\mathcal{C}_f$ and $\mathcal{C}_g$ consist of linear mappings then the autoencoder is called a \emph{linear autoencoder}.

Autoencoders have achieved striking successes when $f$ and $g$ are
selected through training from the class of functions realized
by multilayer perceptrons of a given architecture~\cite{HintonS06}.
Yet, the canonical autoencoder formulation described above has
a notable failing, namely that for linear autoencoders,
optimal choices of $f$ and $g$ do not necessarily identify the 
principal components of the dataset; they merely identify the principal
subspace~\cite{BourlardK88, BaldiH89}. That is, the components of $f(\data)$ 
are not necessarily proportional to projections of $\data$ against the eigenvectors of the covariance matrix
\begin{equation}\label{eq:cov}
\cov\ed \frac{1}{n} \sum_{i = 1}^n \data_i \cdot \data_i^\top,
\end{equation}
which we assume without loss of generality is full rank.
Thus, linear autoencoders do not recover Principal Component
Analysis (PCA). The reason {for this} is that both the objective 
(the distortion) and the constraint (the dimensionality
of the latents) are invariant to an invertible transformation 
applied after the encoder with its inverse applied 
before the decoder. It is desirable for linear autoencoders to recover PCA for two reasons. First, from a representation learning standpoint, it guarantees that the autoencoder recovers uncorrelated features. Second, since a conventional linear autoencoder has a large number of globally optimal solutions corresponding to different bases of the principal subspace, it is preferable to eliminate this indeterminism.

Autoencoders are sometimes described as ``compressing'' the
data~\cite{Santo12,BourlardK88, LiaoZWLL21, Bishop06}, even though $f$
can be invertible even when $k < d$.
We show that by embracing this {compression-}view, one can obtain 
autoencoders that are {able} to recover PCA.
Specifically, we consider linear autoencoders with quantized (or,
equivalently, noisy) latent variables with a constraint on the estimated
number of bits required to transmit the quantized latents
under fixed-rate coding. We call this problem 
\emph{Principal Bit Analysis (PBA).}
The constraint turns out to be a strictly Schur-concave
function of the set of variances of the latent variables (see 
the supplementary for a review of Schur-concavity). Although
finding the optimal $f$ and $g$ for this loss function is
a nonconvex optimization problem,  we
show that for any strictly Schur-concave loss function,
an optimal $f$ must send projections of the data along the 
principal components, assuming that the empirical covariance
matrix of the data has only simple eigenvalues. That is,
imposing a strictly Schur-concave loss in place of a
simple dimensionality constraint suffices to ensure recovery of PCA.
The idea is that the strict concavity of the loss function 
eliminates the rotational invariance described above.
As we show, even a slight amount of ``curvature'' in the 
constraint forces the autoencoder to spread the variances
of the latents out as much as possible, resulting in recovery
of PCA. If the loss function is merely Schur-concave, then 
projecting along the principal components is optimal, but 
not necessarily uniquely so. 

Using this theorem, we can efficiently solve PBA.
We validate the solution experimentally by using it to construct
a fixed-rate compression algorithm for arbitrary vector-valued
data sources.  We find that the PBA-derived compressor  beats
existing linear, fixed-rate compressors both in terms of
mean squared error, for which it is optimized, and in terms
of the structural similarity index measure (SSIM) and downstream
classification accuracy, for which it is not. 

A number of variable-rate multimedia compressors have recently
been proposed that are either related to, or directly
inspired by, autoencoders \cite{TschannenAL18, TodericiVJHMSC17,
BalleLS16, TodericiOHVMBCS16, TheisSCH17, RippelB17, HabibianRTC19,
AgustssonMTCTBG17, BalleMSHJ18, ZhouCGSW18, AgustssonTMTG19, BalleCMSJAHT20}. 
As a second application of our result,
we show that for Gaussian sources, a linear form of such a
compressor is guaranteed to recover PCA. Thus we show that
ideas from compression can be fruitfully fed back into the
original autoencoder problem.

The contributions of the paper are
\begin{itemize}
\item We propose a novel linear autoencoder formulation in which
	the constraint is Schur-concave. We show that this generalizes
      conventional linear autoencoding.
\item If the constraint is strictly Schur-concave and the 
        covariance matrix of the data has only simple 
       eigenvalues, then we show that the autoencoder
       provably recovers PCA, providing a new remedy
       for a known limitation of linear autoencoders.
\item We use the new linear autoencoder formulation to efficiently
	  solve a fixed-rate compression problem that we
       call \emph{Principal Bit Analysis (PBA).}
\item We demonstate experimentally that PBA outperforms
	 existing fixed-rate compressors on a variety
		of data sets and metrics.
\item We show that a linear, variable-rate compressor that 
      is representative of many autoencoder-based
	compressors in the literature effectively has
       a strictly Schur-concave loss, and therefore it
		recovers PCA.
\end{itemize}






\textbf{Related Work.} Several recent works have examined how linear autoencoders can be 
modified to guarantee recovery of PCA. Most solutions involve eliminating the invariant global optimal solutions by introducing regularization of some kind. \cite{OftadehSWS20} propose a loss function which adds $k$ penalties to recover the $k$ principal directions, each corresponding to recovering up to the first $i \leq k$ principal directions. \cite{KuninBGS19} show that $\ell_2$ regularization helps reduce the symmetry group to the orthogonal group. \cite{BaoLSG20} further break the symmetry by considering non-uniform $\ell_2$ regularization and deterministic dropout. \cite{LadjalNP19} consider a nonlinear autoencoder with a covariance loss term to encourage finding orthogonal directions. Recovering PCA is an important problem even in the stochastic counterpart of autoencoders. \cite{LucasTGN19} analyze linear variational autoencoders (VAEs) and show that the global optimum of its objective is identical to the global optimum of log marginal likelihood of probabilistic PCA (pPCA). \cite{RolinekZM19} analyze an approximation to the VAE loss function and show that the linear approximation to the decoder is orthogonal. 

Our result on variable-rate compressors is connected to the sizable recent
literature on compression using autoencoder-like architectures. Representative 
contributions to the literature were noted above. Those works focus mostly
on the empirical performance of deep, nonlinear networks, with a particular
emphasis on finding a differentiable proxy for quantization so as to train
with stochastic gradient descent. In contrast, this work considers provable
properties of the compressors when trained perfectly.

\textbf{Notation.} We denote matrices by bold capital letters e.g. $\genmat$, and vectors by bold small, e.g. $\genvec$. The $j^{\text{th}}$ column of a matrix $\genmat$ is denoted by $\genmatvec_j$ and the $j^{\text{th}}$ entry of a vector $\genvec$ by $\left[\genvec\right]_j$. We denote the set $\left \lbrace 1,2, \cdots \ipdim \right \rbrace$ by $\left[ \ipdim \right]$. A sequence $a_1, a_2, \cdots a_n$ is denoted by $\{a_i\}_{i=1}^{n}$. We denote the zero column by $\bm{0}$. Logarithms without specified bases denote natural logarithms.

\textbf{Organization.} The balance of the paper is organized as follows. We describe our constrained linear autoencoder framework in Section~\ref{sec:FRAMEWORK}. This results in an optimization problem that we solve for any Schur-concave constraint in Section~\ref{subsec:PROOF}. In Section~\ref{sec:PBAPCA}, we recover linear autoencoders and PBA under our framework. We apply the PBA solution to a problem in variable-rate compression of Gaussian sources in Section~\ref{sec:VARIABLE}. Section~\ref{sec:EXPERIMENTS} contains experiments comparing the performance of the PBA-based fixed-rate compressor against existing fixed-rate linear compressors on image and audio datasets.



\section{Linear Autoencoding with a Schur-Concave Constraint}\label{sec:FRAMEWORK}

Throughout this paper we consider $\mathcal{C}_f$ and 
$\mathcal{C}_g$ to be the class of linear functions. The 
functions $f\in\mathcal{C}_f$ and $g \in\mathcal{C}_g$ can then be represented by $d$-by-$d$ matrices, 
respectively, which we denote by
$\encnew$ and $\decnew$, respectively. Thus we have
\begin{align}
    f(\data) & = \encnew^\top \data \\
    g(\data) & = \decnew \data.
\end{align}
We wish to design $\encnew$ and $\decnew$ to
minimize the mean squared error when the latent variables
$\encnew^\top \data$ are quantized, subject to a constraint
on the number of bits needed to represent the quantized latents.
We accomplish this via two modifications of the canonical autoencoder.
First, we perturb the $d$ latent variables with zero-mean
additive noise with covariance matrix $\noisevar \ide$, which we denote by
$\noise.$ Thus the input to the decoder is 
\begin{equation}
\encnew^\top \data + \noise
\end{equation}
and our objective is to minimize the mean squared error
\begin{equation}
    \label{eq:1stmse}
 \frac{1}{n} \sum\limits_{i=1}^{n} \EE_{\noise}\left[ \left \lVert \data_i - \decnew\left( \encnew^\top\data_i+ \noise \right) \right \rVert_2^2 \right].
\end{equation}

This is equivalent to quantizing the latents, in the following sense~\cite{ZamirF92}.
Let $Q(\cdot)$ be the function that maps any real number to its nearest 
integer and $\eps$ be a random variable uniformly distributed over $[-1/2,1/2]$. Then for $X$ independent of $\eps$, the quantities $Q(X + \eps) - \eps$
and $X + \eps$ have the same joint distribution with $X$.
Thus (\ref{eq:1stmse}) is exactly the mean squared error if the latents
are quantized to the nearest integer and $\noisevar = \frac{1}{12}$, 
assuming that the quantization is dithered. The overall system is depicted
in Fig.~\ref{fig:blockdiag}.

%

\tikzstyle{block} = [rectangle, draw, text width=5em, text centered, rounded corners, minimum height=4em]
\tikzstyle{line} = [draw, -latex']
\tikzstyle{input} = [coordinate]
\begin{figure}[!htb]
\begin{center}
\begin{tikzpicture}[node distance=1cm, auto]
	\node [input, name=input] {};
    \node [block,right = 1.5cm of input] (A) {Linear Encoder ($\encnew$)};
    \node [block, right=1.5cm of A] (B) {Quantizer};
    \node [block, right=2cm of B] (C) {Linear Decoder ($\decnew$)};
    \node [text width = 1.5cm, right = 0.5cm of C] (D) {$\decnew\left(\encnew^\top \data_i + \bm{\eps} \right)$};
	\draw [draw,->] (input) -- node {$\data_i$} (A);
    \path [line] (A) -- node [text width=2cm,midway,above,align=center ] {$\encnew^\top \data_i$} (B);
    \path [line] (B) -- node [text width=2cm,midway,above,align=center ] {$\encnew^\top \data_i + \bm{\eps}$} (C);
    \path [line] (C) -- (D);

\end{tikzpicture}
\end{center}
\caption{Compression Block Diagram}\label{fig:blockdiag}
\end{figure}
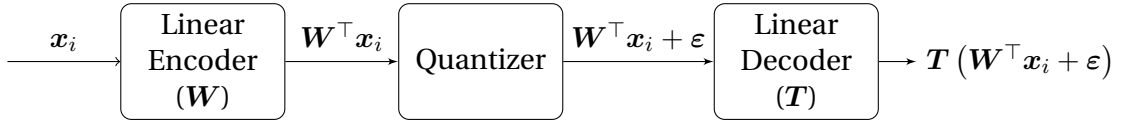
We wish to constrain the number of bits needed to describe
the latent variables. We assume that the $j$th quantized latent
is clipped to the interval 
$$\left(-\frac{\sqrt{ (2a)^2 \encvec_j^\top \cov \encvec_j + 1}}{2},
\frac{\sqrt{(2a)^2 \encvec_j^\top \cov \encvec_j + 1}}{2}\right],$$ where
$a > 0$ is a hyperparameter and the covariance matrix $\cov$ is as defined in \eqref{eq:cov}. The idea is that for sufficiently
large $a$, the interval $$\left(-a\sqrt{ \encvec_j^\top \cov 
\encvec_j}, a\sqrt{ \encvec_j^\top \cov \encvec_j}\right]$$ 
contains the latent with high probability, and adding $1$
accounts for the expansion due to the dither. The number of
bits needed for the $j$th latent is then
\begin{align}
    \log \left(\sqrt{ 4a^2 \encvec_j^\top \cov \encvec_j + 1}\right)
      = \frac{1}{2} \log \left(4a^2 \encvec_j^\top \cov \encvec_j + 1\right).
\end{align}
We arrive at our optimization problem:
\begin{equation}\label{eq:MASTEROPTIPBA}
\begin{aligned}
\inf\limits_{\encnew,\decnew}  \quad & \frac{1}{n} \sum\limits_{i=1}^{n} \EE_{\noise}\left[ \left \lVert \data_i - \decnew\left( \encnew^\top\data_i+ \noise \right) \right \rVert_2^2 \right] \\ 
    \text{subject to} \quad & R \geq \sum_{j = 1}^d \frac{1}{2} \log \left(4a^2 \encvec_i^\top \cov \encvec_i + 1\right).
\end{aligned}
\end{equation}
Note that the function 
$$\{\encvec_j^\top \cov \encvec_j\}_{j = 1}^d \mapsto \sum_{j = 1}^d \frac{1}{2} \log \left(4a^2 \encvec_i^\top \cov \encvec_i + 1\right)$$
is strictly Schur-concave (see Appendix~\ref{app:schur} for a brief review
of Schur-concavity). Our first result only requires that the constraint
is Schur-concave in the set of latent variances, so we will consider
the more general problem

\begin{equation}\label{eq:MASTEROPTI}
\begin{aligned}
\inf\limits_{\encnew,\decnew}  \quad & \frac{1}{n} \sum\limits_{i=1}^{n} \EE_{\noise}\left[ \left \lVert \data_i - \decnew\left( \encnew^\top\data_i+ \noise \right) \right \rVert_2^2 \right] \\ 
    \text{subject to} \quad & R \geq \rho\left(\left\{\encvec_j^\top \cov \encvec_j\right\}_{j = 1}^d\right)
\end{aligned}
\end{equation}
where $\rateproxy(\cdot)$ is any Schur-concave function.

Expressing the objective in~\eqref{eq:MASTEROPTI} in terms of $\cov$, the optimization problem reduces to
\begin{equation}\label{eq:MASTEROPTI_LIN}
\begin{aligned}
\inf \limits_{\encnew, \decnew}  \quad & \tr\left( \cov \right) - 2\tr\left(\cov \encnew \decnew^\top \right) + \tr\left( \decnew\left( \encnew^\top \cov \encnew + \noisevar\bm{I} \right) \decnew^\top \right) \\ 
\text{subject to} \quad & R \geq \rho\left(\left\{\encvec_j^\top \cov \encvec_j\right\}_{j = 1}^d\right).
\end{aligned}
\end{equation}

Since $\decnew$ does not appear in the rate constraint, the optimal $\decnew$ can be viewed as the Linear Least Squares Estimate (LLSE) of a random $\data$ given $\encnew^\top \data + \noise$. Therefore, the optimal decoder, $\decnew^*$ for a given encoder $\encnew$ is (e.g. \cite{Kay98}):

\begin{equation}
\decnew^* = \cov \encnew (\encnew^{\top} \cov \encnew + \noisevar\bm{I})^{-1}.
\end{equation}
Substituting for $\decnew$ in \eqref{eq:MASTEROPTI_LIN} yields an optimization problem over only $\encnew$
\begin{equation} \label{eq:MASTEROPTI_MMSE}
\begin{aligned} 
\inf\limits_{\encnew} \quad & \tr(\cov) - \tr(\cov \encnew (\encnew^{\top} \cov \encnew + \noisevar\bm{I} )^{-1}\encnew^{\top} \cov ) \\
\text{subject to} \quad & R \geq \rho\left(\left\{\encvec_j^\top \cov \encvec_j\right\}_{j = 1}^d\right).
\end{aligned}
\end{equation}

This problem is nonconvex in general.
In the following subsection, we prove a structural result about
the problem for a Schur-concave $\rateproxy$. Namely, we show that
the nonzero rows of $\encnew$ must be eigenvectors of $\cov$.
In Section~\ref{sec:PBAPCA}, we solve the problem for the specific 
choice of $\rateproxy$ in (\ref{eq:MASTEROPTIPBA}). We also show how this
generalizes conventional linear autoencoders.

\subsection{Optimal Autoencoding with a Schur-Concave Constraint}\label{subsec:PROOF}


The following is the main theoretical result of the paper.



\begin{theorem}\label{thm:diagisoptimal}
For Schur-concave $\rateproxy:\RR_{\geq 0}^{d} \rightarrow \RR_{\geq 0}$ and $R>0$, the set of matrices whose nonzero columns are eigenvectors of the covariance matrix $\cov$ is optimal for \eqref{eq:MASTEROPTI_MMSE}. If $\rateproxy$ is strictly Schur-concave and $\cov$ contains distinct eigenvalues, this set contains all optimal solutions of \eqref{eq:MASTEROPTI_MMSE}.
\end{theorem}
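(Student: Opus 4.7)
The plan rests on two classical facts. First, the objective in \eqref{eq:MASTEROPTI_MMSE} is invariant under right multiplication of $\encnew$ by an orthogonal matrix $\ort$: using $\ort^\top \ort = \ide$, a short calculation shows that replacing $\encnew$ with $\encnew \ort$ leaves $\cov \encnew (\encnew^\top \cov \encnew + \noisevar \ide)^{-1} \encnew^\top \cov$ unchanged. Second, the Schur--Horn theorem says that as $\ort$ varies, the diagonal of $\ort^\top \encnew^\top \cov \encnew \ort$ sweeps out the convex hull of the permutations of the eigenvalues of $\encnew^\top \cov \encnew$, attaining those eigenvalues themselves when $\ort$ diagonalizes $\encnew^\top \cov \encnew$. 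Since the constraint in \eqref{eq:MASTEROPTI_MMSE} depends on $\encnew$ only through that diagonal, Schur-concavity of $\rateproxy$ guarantees that such a rotation never worsens feasibility. I may therefore restrict attention to encoders with $\encnew^\top \cov \encnew = \diag(d_1, \ldots, d_d)$.

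Under this reduction I write $\encnew = \cov^{-1/2} \bm{N}$, which is legitimate because $\cov$ is full rank; the diagonal structure then forces $\bm{N} = [\sqrt{d_1}\,\bm{q}_1 \mid \cdots \mid \sqrt{d_d}\,\bm{q}_d]$ for orthonormal vectors $\{\bm{q}_j : d_j > 0\}$ (extended arbitrarily when $d_j = 0$). A direct expansion rewrites the objective as
\begin{equation*}
\tr(\cov) - \sum_{j=1}^d \frac{d_j}{d_j + \noisevar}\,\bm{q}_j^\top \cov \bm{q}_j.
\end{equation*}
Letting $\lambda_1 \geq \cdots \geq \lambda_d$ denote the eigenvalues of $\cov$ with eigenvectors $\bm{u}_1, \ldots, \bm{u}_d$, and ordering the indices so that $c_j := d_j/(d_j+\noisevar)$ decreases in $j$, the von Neumann/Ky Fan trace inequality yields $\sum_j c_j\,\bm{q}_j^\top \cov \bm{q}_j \leq \sum_j c_j \lambda_j$, with equality attained when $\bm{q}_j = \bm{u}_j$. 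Back-substituting, $\encvec_j = \sqrt{d_j/\lambda_j}\,\bm{u}_j$ is a scalar multiple of an eigenvector of $\cov$, proving the first claim.

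For the uniqueness statement I first observe that at any optimum the constraint is binding, since $d \mapsto d\lambda/(d+\noisevar)$ is strictly increasing and any slack would allow scaling up an active column to strictly decrease the objective. Strict Schur-concavity of $\rateproxy$ then forces $\encnew^\top \cov \encnew$ to be diagonal: if not, the diagonalizing rotation would push the diagonal strictly inside the majorization polytope, strictly lowering $\rateproxy$ below $R$ while preserving the objective, after which enlarging some $d_j$ would strictly decrease the objective and contradict optimality. A pairwise perturbation argument based on the Schur--Ostrowski condition shows that two equal $d_j$'s cannot be paired optimally with two distinct eigenvalues, so the $c_j$ are distinct; the Ky Fan bound above is then strict unless $\bm{q}_j = \bm{u}_j$ for every $j$, forcing every nonzero column of any optimal $\encnew$ to be an eigenvector of $\cov$. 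The main obstacle I anticipate is executing this last perturbation step rigorously without assuming differentiability of $\rateproxy$; I would handle it by applying the strict Schur-concavity definition directly to an infinitesimal T-transform between two candidate equal coordinates and tracking how this shifts mass within the pair to strictly increase the objective.
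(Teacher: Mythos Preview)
Your proposal is correct and follows essentially the same approach as the paper: rotate $\encnew$ to diagonalize $\encnew^\top\cov\encnew$ (objective invariant, constraint weakly/strictly improved via Schur--Horn and Schur-concavity), reparametrize through $\cov^{1/2}$ to obtain orthonormal directions, and then bound the resulting bilinear form by the sorted eigenvalue sum. The only notable presentational difference is that where you invoke the von Neumann/Ky Fan trace inequality directly, the paper derives the same bound by hand via an Abel-summation argument combined with Schur--Horn majorization and the interlacing corollary for principal submatrices; and your explicit ``constraint is binding, so slack can be exploited'' step makes slightly more visible an implication the paper states tersely. Your perturbation for distinctness of the nonzero $d_j$'s (via a T-transform and strict Schur-concavity, avoiding differentiability) is exactly what the paper does.
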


\begin{proof}


Let the eigenvalues of $\cov$ be $\{\sigma_i^2\}_{i=1}^{d}$ with $\sigma_1^2 \ge \sigma_2^2 \ge \ldots \ge \sigma^2_d$. Let the eigendecomposition of $\cov$ be given by $\cov = \eigmat \diagcov \eigmat^{\top}$ where $\eigmat$ is an orthogonal matrix whose columns are the eigenvectors of $\cov$ and $\diagcov$ is a diagonal matrix with entries $\left \lbrace \sigma_i^2 \right \rbrace_{i=1}^{d}$.

We first prove that the optimal value of \eqref{eq:MASTEROPTI_MMSE} can be achieved by \newest{a} $\encnew$ such that $\encnew^{\top} \cov \encnew$ is a diagonal matrix. Let $\diagenc = \encnew \ort$ where $\ort$ is the orthogonal matrix obtained from the eigendecomposition of $\encnew^{\top} \cov \encnew$ i.e., $$\encnew^{\top}\cov \encnew = \ort\dcov \ort^{\top},$$ where $\dcov$ is a diagonal matrix formed from the eigenvalues of $ \encnew^{\top}\cov \encnew$. Note that 
\begin{align*}
\tr \left( \cov \widetilde{\encnew} \left( \widetilde{\encnew}^{\top} \cov \widetilde{\encnew} + \noisevar\ide \right)^{-1} \widetilde{\encnew}^{\top} \cov\right) &= \tr \left( \cov \encnew \ort \left(  \dcov + \noisevar\ide \right)^{-1} \ort^{\top} \encnew^{\top} \cov \right) \\
&= \tr \left(  \cov \encnew \left( \ort \dcov \ort^{\top} + \noisevar\ort \ort^{\top} \right)^{-1} \encnew^{\top} \cov \right).
\end{align*}
Since $\ort \dcov \ort^{\top} = \encnew^{\top} \cov \encnew$ and $\ort \ort^{\top}= \ide$, the objective remains the same. 
We now show that the constraint is only improved. Denoting the eigenvalues of $\encnew^{\top} \cov \encnew$ by $\left \lbrace \nu_j \right \rbrace_{j=1}^d$, we have
\begin{equation*}
\rateproxy\left( \left \lbrace \widetilde{\encvec}_j^{\top} \cov \widetilde{\encvec}_j \right \rbrace_{j=1}^{\ipdim} \right) 
=  \rateproxy\left( \left \lbrace \ortvec_j^{\top} \encnew^{\top} \cov \encnew \ortvec_j \right \rbrace_{j=1}^{\ipdim} \right) =  \rateproxy \left( \left \lbrace  \nu_j \right \rbrace_{j=1}^{\ipdim} \right).
\end{equation*}
%
    
Now since the eigenvalues of a Hermitian matrix majorize its diagonal elements by the Schur-Horn theorem \cite[Theorem~4.3.45]{HornJ85},
$$ \left \lbrace \encvec_j^{\top} \cov \encvec_j \right \rbrace_{j=1}^{d} \prec \left \lbrace \nu_j \right \rbrace_{j=1}^{\ipdim}.$$
Since $\rateproxy$ is Schur-concave, this implies
$$ \rateproxy\left( \left \lbrace \encvec_j^{\top} \cov \encvec_j \right \rbrace_{j=1}^{\ipdim} \right) \geq  \rateproxy \left( \left \lbrace \nu_j \right \rbrace_{j=1}^{\ipdim} \right) = \rateproxy \left( \left \lbrace \widetilde{\encvec}_j^{\top} \cov \widetilde{\encvec}_i \right \rbrace_{j=1}^{\ipdim} \right). $$
Therefore, if $\rateproxy$ is Schur-concave, the rate constraint can only improve. This implies an optimal solution can be attained when 
    $\encnew$ is such that $\encnew^{\top} \cov \encnew$ is diagonal. If $\rateproxy$ is strictly Schur-concave, the rate constraint strictly improves implying that the optimal $\encnew$ must be such that $\encnew^{\top} \cov \encnew$ is diagonal.
 This implies that
 \begin{align*}
\tr\left( \cov \encnew \left( \encnew^{\top} \cov \encnew + \noisevar\ide \right)^{-1} \encnew^{\top} \cov \right)  &= \tr\left(  \encnew^{\top} \cov^2 \encnew \left( \encnew^{\top} \cov \encnew + \noisevar\ide \right)^{-1}\right) \\
&= \sum\limits_{i=1}^{d} \frac{\encvec_i^{\top} \cov^2 \encvec_i}{\noisevar + \encvec_i^{\top} \cov \encvec_i}.
\end{align*}   

Note that minimizing the objective in \eqref{eq:MASTEROPTI_MMSE} is equivalent to maximizing the above expression. Perform the change of variable 
    \begin{align*}
        \encvec_j & \mapsto \begin{cases} 
            \left(\frac{\cov^{1/2} \encvec_j}{||\cov^{1/2} \encvec_j||}, ||\cov^{1/2} \encvec_j||^2\right) & \text{if $\cov^{1/2} \encvec_j \ne \bm{0}$} \\
        (\bm{0},0) & \text{if $\cov^{1/2} \encvec_j = \bm{0}$}
    \end{cases}  \\
        &  = (\var_j,y_j).
    \end{align*}
The assumption that $\encnew^\top \cov \encnew$ is diagonal and
the normalization in the definition of $\var_j$ implies that 
\[
    \vort = [ \var_1 \var_2, \cdots, \var_d]
\]
is a matrix whose nonzero columns form an orthonormal set. Rewriting the objective in terms of
the $(\var_j,y_j)$, we have

\begin{equation}\label{eq:objinY}
\sum\limits_{i=1}^{d} \frac{\encvec_i^{\top} \cov^2 \encvec_i}{\noisevar + \encvec_i^{\top} \cov \encvec_i} = \sum\limits_{i=1}^{d} \var_i^{\top} \cov \var_i \frac{y_i}{\noisevar + y_i} = \sum\limits_{i=1}^{d} \var_i^{\top} \cov \var_i m_i,
\end{equation}	
where  $m_i = \frac{y_i}{\noisevar+y_i}$. 
Observe that under this new parametrization, the constraint only depends on $\{y_i \}_{i=1}^{d}$. Without loss of generality, we assume that $y_1 \geq y_2 \geq \cdots \ge y_d$, implying that $m_1 \geq m_2 \geq \cdots \geq m_d$.
We now prove that for given $\{y_i \}_{i=1}^{d}$, choosing the $\var_i$ along the eigenvectors of $\cov$ is optimal.


Denote the diagonal elements of $\vort^\top \cov \vort$ by $\{\lambda_i^2\}_{i = 1}^d$ and let  $\{ \lambda_{i, \downarrow}^2 \}_{i=1}^{d}$  denote the same diagonal elements arranged in descending order. Denote the eigenvalues of $\vort^{\top} \cov \vort$ by $\{ \orthvar_i^2 \}_{i=1}^{d}$ where $\orthvar_1 \geq \orthvar_2 \geq \cdots \geq \orthvar_d$. 
Again invoking the Schur-Horn theorem, the eigenvalues of $\vort^{\top} \cov \vort$
 majorize its diagonal entries
\begin{equation}\label{eq:majorization}
\left \lbrace \lambda_i^2 \right \rbrace_{i=1}^{d} \prec \left \lbrace \orthvar_i^2 \right \rbrace_{i=1}^{d}.
\end{equation} 
 Substituting $\lambda_i^2 = \var_i^{\top} \cov \var_i$ in \eqref{eq:objinY}, we have
\begin{align*}
\sum\limits_{i=1}^{d} \lambda_i^2 m_i \stackrel{(a)}{\leq} \sum\limits_{i=1}^{d} \lambda_{i,\downarrow}^2 m_i  &= \lambda_{1,\downarrow}^2 m_1 + \sum\limits_{i=2}^{d} \left( \sum\limits_{j=1}^{i} \lambda_{j,\downarrow}^2 -\sum\limits_{j=1}^{i-1} \lambda_{j,\downarrow}^2 \right) m_i \\
&= \lambda_{1,\downarrow}^2 m_1 +  \sum\limits_{i=2}^{d} m_i \sum\limits_{j=1}^{i} \lambda_{j,\downarrow}^2 - \sum\limits_{i=2}^{d} m_i  \sum\limits_{j=1}^{i-1} \lambda_{j,\downarrow}^2 \\
&=  \lambda_{1,\downarrow}^2 (m_1 - m_2) + m_d \left( \sum\limits_{j=1}^{d} \lambda_{j,\downarrow}^2 \right) + \sum\limits_{i=2}^{d-1} \left( m_i - m_{i+1}\right) \sum\limits_{j=1}^{i} \lambda_{j,\downarrow}^2 \\ 
&\stackrel{(b)}{\leq} \orthvar_1^2 (m_1 - m_2) + m_d \left( \sum\limits_{j=1}^{d} \orthvar_j^2 \right) + \sum\limits_{i=2}^{d-1} (m_i - m_{i+1}) \sum\limits_{j=1}^{i} \orthvar_j^2\\
 &\stackrel{(c)}{\leq} \sigma_1^2 (m_1 - m_2) + m_d \left( \sum\limits_{j=1}^{d} \sigma_j^2 \right) + \sum\limits_{i=2}^{d-1} (m_i - m_{i+1}) \sum\limits_{j=1}^{i} \sigma_j^2\\
&= \sum\limits_{i=1}^{d} \sigma_i^2 m_i,
\end{align*}

where inequality $(a)$ follows from the assumption that $m_1 \geq m_2 \geq \cdots \geq m_d$, and $(b)$ from the definition in \eqref{eq:majorization}. Since $\vort $'s nonzero columns form an orthonormal set, the eigenvalues of $\vort^{\top} \cov \vort$, when arranged in descending order, are at most the eigenvalues of $\cov$ from Corollary 4.3.37 in \cite{HornJ85}, and therefore $(c)$ follows.

This upper bound is attained when $\var_i = \eigvec_i$ for nonzero $y_i$, where $\eigvec_i$ is the normalized eigenvector of $\cov$ corresponding to eigenvalue $\sigma_i^2$. To see this, note that when $\var_i = \eigvec_i$, $\lambda_i^2 = \orthvar_i^2 = \sigma_i^2$. From the definition of $\var_i, \encvec_i = \cov^{-1/2}\eigvec_i \sqrt{y_i} = \eigvec_i \frac{\sqrt{y_i}}{\sigma_i}$. Therefore, for a Schur-concave $\rateproxy$, the set of matrices whose nonzero columns are eigenvectors of $\cov$ is optimal. We now prove that for a strictly Schur-concave $\rateproxy$, if $\cov$ has distinct eigenvalues, this set contains all of the optimal solutions $\encnew$.


We know that for a fixed $y_1 \geq y_2 \geq \cdots \geq y_d$, (implying a fixed $m_1 \geq m_2 \geq \cdots \geq m_d$) the upper bound $\sum\limits_{i=1}^{d} \sigma_i^2 m_i$ is attained by the previous choice of $\var_i$. Note that if all nonzero $m_i$ are distinct, equality in $(b)$ and $(c)$ is attained if and only if the nonzero diagonal elements of $\vort^{\top} \cov \vort$ equal the corresponding eigenvalues of $\cov$. This implies that, if all nonzero $m_i$ are distinct, the upper bound is attained if and only if $\var_i = \eigvec_i$ for nonzero $y_i$. Therefore, it is sufficient to prove that for the following optimization problem

\begin{equation} \label{eq:ubschur}
\begin{aligned} 
\sup\limits_{\{y_i \geq 0\}} \quad & \sum\limits_{i=1}^{d} \sigma_i^2 \frac{y_i}{\noisevar + y_i}  \\
\text{subject to} \quad & R \geq \rho\left(\left\{y_i \right\}_{i= 1}^d\right),
\end{aligned}
\end{equation}
any optimal $\{y_i\}$ must be such that the nonzero $y_i$ are distinct. Firstly, note that since $\sigma_1^2 > \sigma_2^2 > \cdots > \sigma_d^2$, we must have $y_1 \geq y_2 \geq \cdots \geq y_d$. Assume to the contrary that for an optimal $\{ y_i \}_{i=1}^{d}$ there exists $1 \leq j, \ell < d$ such that $y_{j-1} > y_j = y_{j+1} = y_{j+2} = \cdots = y_{j+\ell} > y_{j+\ell+1} \geq 0$, where $y_0$ is chosen to be any real number strictly greater than $y_1$ and $y_{d+1}=0$. 
Take $\delta > 0$ small.
Denote a new sequence $ \{ y'_i \}_{i=1}^{d}$ where $y'_{j} = y_j+\delta, y'_{j+\ell} = y_{j+\ell} - \delta$ and $y'_i = y_i$ for $1\leq i \leq d$ with $i \neq j$ and $j+\ell$. 
Since $\rateproxy$ is strictly Schur-concave, the constraint is strictly
improved,
\[ \rateproxy\left(  \{y'_i \}_{i=1}^{d} \right) < \rateproxy\left(  \{y_i \}_{i=1}^{d} \right). \]
Since $\sigma^2_j > \sigma^2_{j + \ell}$,  the objective is strictly
improved for sufficiently small $\delta$,
$$
\sum\limits_{i=1}^{d} \sigma_i^2 \frac{y_i}{\noisevar + y_i} <
\sum\limits_{i=1}^{d} \sigma_i^2 \frac{y'_i}{\noisevar + y'_i},
$$
as desired.
\end{proof}

As a consequence of Theorem~\ref{thm:diagisoptimal}, encoding via an optimal $\encnew$ can be viewed as a projection along the eigenvectors of $\cov$, followed by different scalings applied to each component, i.e. $\encnew  = \eigmat \scalemat$ where $\scalemat$ is a diagonal matrix with entries $\scaling_i \geq 0$ and 
$\eigmat$ is the normalized eigenvector matrix. Only $\scalemat$ remains to be determined, and to this end, we may assume that
$\cov$ is diagonal with nonincreasing diagonal entries, implying $\eigmat = \bm{I}$. In subsequent sections, our choice of $\rateproxy$ will be of the form $\sum\limits_{i=1}^{d} \rateproxy_{sl}$, where $\rateproxy_{sl}: \RR_{\geq 0} \rightarrow \RR_{\geq 0}$\footnote{``sl" stands for single-letter} is (strictly) concave, making $\rateproxy$ (strictly) Schur-concave (see Proposition~\ref{prop:single-letter} in Appendix~\ref{app:schur}). Therefore,  \eqref{eq:MASTEROPTI_MMSE} reduces to

\begin{equation} \label{eq:MASTEROPTI_S}
\begin{aligned} 
\inf\limits_{\scalemat} \quad & \tr(\cov) - \tr(\cov \scalemat (\scalemat^{\top} \cov \scalemat + \noisevar\bm{I} )^{-1}\scalemat^{\top} \cov ) \\
\text{subject to} \quad & R \geq \rateproxy_{sl}\left( \{\scaling_i^2 \sigma^2_i \} \right),
\end{aligned}
\end{equation}
where the infimum is over diagonal matrices $\scalemat$. To handle situations for which 
\begin{equation}
    \lim_{s \rightarrow \infty} \rateproxy_{sl}(s) < \infty,
\end{equation}
we allow the diagonal entries of $\scalemat$ to be $\infty$, with the objective for such cases defined via its continuous extension.

In the next section, we will solve \eqref{eq:MASTEROPTI_S} for several specific choices of $\rateproxy_{sl}$.

\section{Explicit Solutions: Conventional Linear Autoencoders and PBA} \label{sec:PBAPCA}
\subsection{Conventional Linear Autoencoders} \label{subsec:PCA}

Given a centered dataset $\data_1, \data_2, \cdots, \data_n \in \RR^{d}$, consider a linear autoencoder optimization problem where the encoder and decoder, $\encnew$ and $ \decnew$, respectively, are $d$-by-$k$ matrices where $k \leq d$ is a parameter. The goal is to minimize the mean squared error as given by \eqref{eq:1stmse}. PCA corresponds to the global optimal solution of this optimization problem, where $\encnew = \decnew = \eigmat_k$, where $\eigmat_k \in \RR^{d \times k}$ is a matrix whose columns are the $k$ eigenvectors corresponding to the $k$ largest eigenvalues of $\cov$. However, there are multiple global optimal solutions, given by any encoder-decoder pair of the form $\left( \eigmat_k  \bm{V},  \eigmat_k  \bm{V} \right)$, where $\bm{V}$ is an orthogonal matrix \cite{BaldiH89}.

We now recover linear autoencoders through our framework in Section~\ref{sec:FRAMEWORK}. Consider the optimization problem in \eqref{eq:MASTEROPTI_S} where $\rateproxy_{sl}: \RR_{\geq 0} \rightarrow \{0,1 \}$ is a concave function defined as
\begin{equation} \label{eq:PCA_F}
\rateproxy_{sl}(x) = \bm{1} \left[ x > 0 \right].
\end{equation}

Note that this penalizes the dimension of the latents, as desired.
Note also that this cost is Schur-concave but not strictly so.
The fact that PCA solves conventional linear autoencoding,
but is not necessarily the unique solution, follows immediately
from Theorem~\ref{thm:diagisoptimal}.

\begin{theorem}\label{thm:PCA}
    If $\rateproxy_{sl}(\cdot)$ is given by \eqref{eq:PCA_F}, then an optimal solution for \eqref{eq:MASTEROPTI_S} is given by a diagonal matrix $\scalemat$ whose top $\min(\lfloor R \rfloor, d)$ diagonal entries are equal to $\infty$ and the remaining entries are $0$.
\end{theorem}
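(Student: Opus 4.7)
The plan is to combine Theorem~\ref{thm:diagisoptimal} with an elementary greedy argument over the diagonal scalings. First I would invoke Theorem~\ref{thm:diagisoptimal}, which reduces \eqref{eq:MASTEROPTI_MMSE} to \eqref{eq:MASTEROPTI_S}: we may assume without loss of generality that $\cov = \diag(\sigma_1^2,\ldots,\sigma_d^2)$ with $\sigma_1^2 \geq \cdots \geq \sigma_d^2 > 0$ (using that $\cov$ is full rank) and optimize only over diagonal $\scalemat$ with entries $\scaling_i \in [0,\infty]$.

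Next I would observe that with this reduction, and with the specific choice $\rateproxy_{sl}(x) = \mathbf{1}[x > 0]$, both the objective and the constraint decouple completely across indices. Since $\sigma_i^2 > 0$ for every $i$, the constraint collapses to the cardinality bound $|\{i : \scaling_i > 0\}| \leq \lfloor R \rfloor$. The objective simplifies to
\[
    \tr(\cov) - \sum_{i=1}^d \frac{\scaling_i^2 \sigma_i^4}{\scaling_i^2 \sigma_i^2 + \noisevar},
\]
whose $i$-th summand is nondecreasing in $\scaling_i$, vanishes at $\scaling_i = 0$, and tends to $\sigma_i^2$ as $\scaling_i \to \infty$.

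The problem therefore reduces to selecting a subset $S \subseteq [d]$ with $|S| \leq \min(\lfloor R \rfloor, d)$ and, for each $i \in S$, driving $\scaling_i \to \infty$ to extract the maximal per-coordinate contribution $\sigma_i^2$. This greedy subset-selection problem has an obvious solution: since the $\sigma_i^2$ are nonincreasing in $i$, taking $S = \{1,\ldots,\min(\lfloor R \rfloor, d)\}$ maximizes $\sum_{i \in S}\sigma_i^2$, which yields the $\scalemat$ described in the statement.

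The only mildly delicate step is that the per-coordinate optimum is attained at the boundary $\scaling_i = \infty$; I would handle this through the continuous extension of the objective described immediately after \eqref{eq:MASTEROPTI_S}, under which the supremum $\sigma_i^2$ is legitimately attained in the limit. No real obstacle is anticipated beyond correctly invoking Theorem~\ref{thm:diagisoptimal} and noting that the indicator constraint is Schur-concave but not strictly so, which is why only optimality (not uniqueness) is claimed.
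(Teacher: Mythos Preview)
Your proposal is correct and essentially identical to the paper's proof: both rewrite the objective as a separable sum over coordinates, observe that the indicator constraint is simply a cardinality bound $|\{i:\scaling_i>0\}|\le\lfloor R\rfloor$, push each active $\scaling_i$ to $\infty$ to drive its distortion term to zero, and then pick the support as the top $\min(\lfloor R\rfloor,d)$ eigenvalue indices. The only minor redundancy is your invocation of Theorem~\ref{thm:diagisoptimal}: the statement is already posed for~\eqref{eq:MASTEROPTI_S}, i.e.\ after the reduction to diagonal $\scalemat$, so that step is not needed here.
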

\begin{proof}
Let $\cF \ed \left \lbrace i \in \left[ \ipdim \right] : \scaling_i > 0 \right \rbrace$, implying $\left \lvert \cF \right \rvert \leq R$. Since $\cov$ and $\scalemat$ are diagonal, the optimization problem in \eqref{eq:MASTEROPTI_S} can be written as 

\begin{equation} \label{eq:PCA_OPT}
\begin{aligned} 
\inf\limits_{\left \lbrace \scaling_{\ell} \right \rbrace} & \sum\limits_{j \in \left[d\right] \backslash \cF} \sigma_j^2 + \sum\limits_{ \ell \in \cF} \frac{\noisevar \sigma_{\ell}^2}{\noisevar+ \sigma_{\ell}^2 \scaling_{\ell}^2} \\
\text{subject to} \quad & R \geq \sum\limits_{i=1}^{d} \bm{1}\left[ \scaling_i > 0 \right].
\end{aligned}
\end{equation}

 Since the value of $\scaling_{\ell}, \ell \in \cF$ does not affect the rate constraint, each of the $\scaling_{\ell}$ can be made as large as possible without changing the rate constraint. Therefore, the infimum value of the objective is $\sum\limits_{j \in \left[ d \right] \backslash \cF} \sigma^2_j$. Since we seek to minimize the distortion, the optimal $\cF$ is the set of indices with the largest $\left \lvert \cF \right \rvert$ eigenvalues. Since the number of these eigenvalues cannot exceed $R$,  we choose $\left \lvert \cF \right \rvert = \min(\lfloor R \rfloor,d)$. 
\end{proof}

Unlike the conventional linear autoencoder framework, in Section~\ref{sec:FRAMEWORK}, the latent variables $\encnew^{\top} \data$ are quantized, which we model with additive white noise of fixed variance.  Therefore, an infinite value of $\scaling_i$ indicates sending $\eigvec_i^{\top} \data$ with full precision where $\eigvec_i$ is the eigenvector corresponding to the $i^{th}$ largest eigenvalue. This implies that PCA with parameter $k$ corresponds to $\encnew = \eigmat \scalemat$, where $\scalemat$ is a diagonal matrix whose top $k$ diagonal entries are equal to $\infty$ and the  $d-k$ remaining diagonal entries are $0$. Therefore, for any $R$ such that $\lfloor R \rfloor = k$, an optimal solution to \eqref{eq:MASTEROPTI_S} corresponds to linearly projecting the data along the top $k$ eigenvectors, which is the same as PCA. Note that, like \cite{BaldiH89}, we only prove that projecting along the eigenvectors is one of possibly other optimal solutions. However, even a slight amount of curvature in $\rateproxy$ would make it strictly Schur-concave, thus recovering the principal directions.
We next turn to a specific cost function with curvature, namely
the PBA cost function that was our original motivation.

\subsection{Principal Bit Analysis (PBA)}

Consider the choice of $\rateproxy_{sl}:\RR_{\geq 0} \rightarrow \RR_{\geq 0}$ that provided the
original impetus for Theorem~\ref{thm:diagisoptimal}. For $\gamma>\frac{2}{\sigma^2}$,

\begin{equation}\label{eq:LOG_F}
\rateproxy_{sl}(x) = \frac{1}{2} \log(\gamma x+1).
\end{equation}

The nature of the optimization problem depends on the value 
of $\gamma$. For $1 \le \gamma \sigma^2 \le 2$, the problem can be
made convex with a simple change of variable. For $\gamma \sigma^2= 1$,
the problem coincides with the classical waterfilling procedure
in rate-distortion theory, in fact. For $\gamma \sigma^2 > 2$, the problem
is significantly more challenging. Since we are interested in
relatively large values of $\gamma$ for our compression application
(see Section~\ref{sec:EXPERIMENTS} to follow), we focus on the case $\gamma > 2/\sigma^2$.

\begin{algorithm}[h]
	\caption{Principal Bit Analysis (PBA)} \label{alg:PBA}
	\begin{algorithmic}[1]
		\Require $\lambda > 0$, $\scalePBA  > 2$, 
		\begin{equation}
            \cov = \left[ \begin{array}{cccc}
                \sigma^2_1 & 0 & \cdots & 0 \\
                0 & \sigma^2_2 & \cdots & 0 \\
                \vdots & \vdots & \ddots & \vdots \\
                0 & 0 & \cdots & \sigma^2_\ipdim \\
            \end{array} \right] \succ 0,
            \end{equation}
		such that $\sigma^2_1 \ge \sigma^2_2 \ge \cdots \ge \sigma^2_\ipdim$.
		\State If $\lambda \ge \sigma^2_1/(4(\scalePBA - 1))$, Output  
			   $\bar{R}_{\text{opt}} = 0, \bar{D}_{\text{opt}} = \sum_{i = 1}^\ipdim \sigma^2_i$.
		\State Set $\bar{\ipdim} = \max\left\{i : \lambda < \sigma_i^2/4(\alpha - 1)\right\}$.
			\State Set $\bar{R}$, $\bar{D}$ to zero arrays of size $2\bar{\ipdim}$.
			\For{$\send \in \left \lbrace 1,2, \cdots \bar{\ipdim} \right \rbrace $}
				\State $\bar{D}(2\send-1) =  \sum\limits_{i=1}^{\send} \frac{\sigma_i^2}{2(\scalePBA -1)} \left( 1 - \sqrt{1 - \frac{4 \lambda (\scalePBA -1)}{\sigma_i^2}}\right) + \sum\limits_{i=\send+1}^{\ipdim} \frac{\sigma_i^2}{\scalePBA } $, 
				\State $\bar{R}(2\send-1)= \sum\limits_{i=1}^{\send} \frac{1}{2} \log \left( \frac{\sigma_i^2}{4\lambda} \right) + \log\left( 1 + \sqrt{1 - \frac{4 \lambda (\scalePBA -1)}{\sigma_i^2}}\right)$.
				\State $\bar{D}(2\send) =\left( \sum\limits_{i=1}^{\send-1} \frac{\sigma_i^2}{2(\scalePBA -1)} \left( 1 - \sqrt{1 - \frac{4 \lambda (\scalePBA -1)}{\sigma_i^2}}\right) + \frac{\sigma_{\send}^2}{2(\scalePBA -1)} \left( 1 + \sqrt{1 - \frac{4 \lambda (\scalePBA -1)}{\sigma_{\send}^2}}\right) + \sum\limits_{i=\send+1}^{\ipdim} \frac{\sigma_i^2}{\scalePBA } \right)$.
				\State $\bar{R}(2\send)=\sum\limits_{i=1}^{\send} \frac{1}{2} \log \left( \frac{\sigma_i^2}{4\lambda} \right) + \sum\limits_{i=1}^{\send-1} \log\left( 1 + \sqrt{1 - \frac{4 \lambda (\scalePBA -1)}{\sigma_i^2}}\right) + \log\left( 1 - \sqrt{1 - \frac{4 \lambda (\scalePBA -1)}{\sigma_{\send}^2}}\right)$.
			\EndFor
			\State $\send^* \leftarrow \argmin_{j \in \left[ 2\bar{\ipdim} \right]} \bar{D}(j) + \lambda \bar{R}(j). $
		\State Output $\bar{R}_{\text{opt}} = \bar{R}(\send^*), \bar{D}_{\text{opt}} = \bar{D}(\send^*)$.
			
	\end{algorithmic}
\end{algorithm} 


\begin{theorem}
If $\rateproxy_{sl}(\cdot)$ is given by \eqref{eq:LOG_F} for $\gamma > \frac{2}{\sigma^2}$, then for any $\lambda > 0$,
	the pair $\bar{R}_{\text{opt}}, \bar{D}_{\text{opt}}$ obtained from the output of 
	  Algorithm~\ref{alg:PBA} satisfies 
	  \begin{equation} 
		  \label{eq:PBA_OPT_LAGRANGE}
			   \bar{D}_{\text{opt}} + \lambda \bar{R}_{\text{opt}} = 
			  \inf\limits_{\scalemat} \quad \tr(\cov) - \tr(\cov \scalemat (\scalemat^{\top} \cov \scalemat + \noisevar\bm{I} )^{-1}\scalemat^{\top} \cov )  
			  + \lambda \sum_{i = 1}^\ipdim \rateproxy_{sl}\left( \{\scaling_i^2 \sigma^2_i \} \right),
	  \end{equation}
\end{theorem}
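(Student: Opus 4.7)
My plan is to exploit the structural result of Theorem~\ref{thm:diagisoptimal} to reduce the problem to a collection of one-dimensional, non-convex scalar optimizations, and then show that Algorithm~\ref{alg:PBA} enumerates all the relevant candidate minimizers.

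\textbf{Step 1 (Reduction to a separable form).} The single-letter cost $\rateproxy_{sl}(x) = \frac{1}{2}\log(\gamma x + 1)$ is strictly concave in $x$, so $\rateproxy(\{y_i\}) = \sum_i \rateproxy_{sl}(y_i)$ is strictly Schur-concave. Applying Theorem~\ref{thm:diagisoptimal} to the Lagrangian form of \eqref{eq:MASTEROPTI_MMSE}, I may restrict the infimum to diagonal encoders $\scalemat = \diag(\scaling_1, \ldots, \scaling_\ipdim)$ with $\cov = \diag(\sigma_1^2, \ldots, \sigma_\ipdim^2)$ ordered so that $\sigma_1^2 \ge \cdots \ge \sigma_\ipdim^2$. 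Writing $y_i = \scaling_i^2 \sigma_i^2 \in [0,\infty]$, the objective separates coordinatewise as
\[
\sum_{i=1}^{\ipdim} g_i(y_i), \qquad g_i(y) \;:=\; \frac{\sigma_i^2 \noisevar}{\noisevar + y} \;+\; \frac{\lambda}{2}\log(\gamma y + 1).
\]
It therefore suffices to prove that $\bar{D}_{\text{opt}} + \lambda \bar{R}_{\text{opt}} = \sum_{i=1}^{\ipdim} \min_{y_i \ge 0} g_i(y_i)$.

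\textbf{Step 2 (One-dimensional analysis).} Each $g_i$ is a sum of a convex decreasing term and a concave increasing term, so it is non-convex. Setting $g_i'(y) = 0$ reduces to a quadratic in $y$. Solving in closed form and translating back from $y$ to the induced distortion $\noisevar \sigma_i^2/(\noisevar + y)$ yields two candidate distortion values of exactly the form
\[
\frac{\sigma_i^2}{2(\scalePBA-1)}\Bigl(1 \pm \sqrt{1 - 4\lambda(\scalePBA-1)/\sigma_i^2}\,\Bigr),
\]
after identifying $\scalePBA$ with the appropriate normalization of $\gamma\noisevar$. A case analysis on the sign of the radicand $1 - 4\lambda(\scalePBA - 1)/\sigma_i^2$ shows that when $\lambda \ge \sigma_i^2/(4(\scalePBA - 1))$ no interior critical point improves upon the boundary value $g_i(0) = \sigma_i^2$, so $y_i = 0$ is optimal; otherwise the larger root $y_+^{(i)}$ is a local minimum of $g_i$ (the ``$1 - \sqrt{\cdot}$'' branch of the algorithm) while the smaller root $y_-^{(i)}$ is a local maximum (the ``$1 + \sqrt{\cdot}$'' branch).

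\textbf{Step 3 (Prefix structure).} A short exchange argument based on $\sigma_1^2 \ge \cdots \ge \sigma_\ipdim^2$ shows that the benefit $\sigma_i^2 - g_i(y_+^{(i)})$ of sending coordinate $i$ at its local minimum is monotone nondecreasing in $\sigma_i^2$. Consequently, at any minimizer the ``sent'' set $\{i : y_i > 0\}$ is a prefix of the coordinate ordering, so the optimal configuration is of the form ``send the first $r$ coordinates at their interior local minima and keep the remaining $\ipdim - r$ at $y_i = 0$'' for some $r \in \{0, 1, \ldots, \bar{\ipdim}\}$, where $\bar{\ipdim}$ is the threshold defined in the algorithm.

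\textbf{Step 4 (Matching the algorithm).} Substituting the closed-form expressions for $g_i(y_+^{(i)})$ and $g_i(0) = \sigma_i^2$ into $\sum_i g_i(y_i)$ reproduces $\bar{D}(2r-1) + \lambda \bar{R}(2r-1)$ for each prefix length $r$. The even-indexed candidates $(\bar{D}(2r), \bar{R}(2r))$ correspond to placing the boundary coordinate $r$ at the other quadratic root $y_-^{(r)}$; including these in the enumeration guarantees that the Lagrangian optimum on the transition between the ``send $r$'' and ``send $r-1$'' regimes is captured. Therefore minimizing $\bar{D}(j) + \lambda \bar{R}(j)$ over $j \in \{1, \ldots, 2\bar{\ipdim}\}$ realizes $\sum_i \min_{y_i \ge 0} g_i(y_i)$, which by Step~1 equals the right-hand side of \eqref{eq:PBA_OPT_LAGRANGE}.

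\textbf{Main obstacle.} The principal difficulty is the non-convexity of each $g_i$: it can host both a local maximum and a local minimum in the interior of $[0,\infty)$, and whether the coordinatewise minimizer is the boundary $y_i = 0$ or the interior point $y_+^{(i)}$ depends on $\sigma_i^2$, $\lambda$ and $\scalePBA$ through a non-trivial threshold. Carefully matching the algebraic roots of the stationarity quadratic to the distortion/rate expressions used by the algorithm, and establishing the prefix property that makes the $2\bar{\ipdim}$ enumeration exhaustive, constitute the bulk of the technical work.
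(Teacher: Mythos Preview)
Your approach is correct and, in one respect, more direct than the paper's, so a brief comparison is in order.

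The paper proceeds by passing to the variable $D_i = \sigma_i^2/(\scalePBA + s_i'^2\sigma_i^2)$, writing KKT stationarity conditions with multipliers for the box constraints $D_i \le \sigma_i^2/\scalePBA$, and observing that the stationary points form a set of $3^{\ipdim}$ candidates (convex root, concave root, or boundary for each $i$). It then proves two structural lemmas to whittle this down: (i) the ``sent'' set $\cF$ is a prefix of the eigenvalue ordering (Lemma~\ref{lem:NORM_INC}), and (ii) at most one coordinate can sit at the concave root (Lemma~\ref{lem:SINGLE_CONCAVE}). Together these reduce $3^{\ipdim}$ to the $2\bar\ipdim$ candidates enumerated by Algorithm~\ref{alg:PBA}.

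Your route avoids the KKT detour by exploiting separability directly: each scalar function $g_i$ has at most two interior critical points, and the smaller one is a strict local \emph{maximum}, so it can never be the per-coordinate minimizer. This immediately gives two candidates per coordinate (boundary $y_i = 0$ or the local minimum $y_+^{(i)}$) and replaces the paper's Lemma~\ref{lem:SINGLE_CONCAVE} with a one-line observation. Your prefix argument then plays the same role as Lemma~\ref{lem:NORM_INC}. Both approaches are valid; yours is more economical for the Lagrangian statement, while the paper's enumeration of the concave-root candidates is geared toward the remark after the proof about recovering Pareto points off the lower convex envelope of the $(D,R)$ curve.

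One small inaccuracy to fix in Step~4: your rationale that the even-indexed candidates ``capture the transition'' between prefix lengths is not what justifies their inclusion for the Lagrangian infimum. By your own Step~2, the even candidate $2r$ places coordinate $r$ at the local maximum $y_-^{(r)}$ of $g_r$, so $\bar D(2r)+\lambda\bar R(2r)$ is never strictly below some odd candidate. The even entries are simply redundant for \eqref{eq:PBA_OPT_LAGRANGE}; the theorem follows because taking a minimum over a superset that contains all odd candidates (and, when $\bar\ipdim=0$, the all-boundary point handled in line~1) still returns the correct value. You should state this explicitly rather than invoking a transition heuristic.
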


\begin{proof}
Since $\cov$ and $\scalemat$ are diagonal, the optimization problem in 
\eqref{eq:PBA_OPT_LAGRANGE} can be written as 
\begin{equation} \label{eq:PBA_OPT}
\begin{aligned} 
\inf\limits_{\left \lbrace \scaling_i \right \rbrace} \quad & \sum\limits_{i=1}^{d} \frac{\noisevar \sigma_{i}^{2}}{\noisevar+\scaling_{i}^{2} \sigma_{i}^{2}} 
   + \lambda \cdot \frac{1}{2}\sum\limits_{i=1}^{d} \log\left( 1 + \gamma \scaling_{i}^2 \sigma_{i}^{2} \right).
\end{aligned}
\end{equation}
With the following change of variables $ \scalePBA = \gamma \sigma^2 $, $\scaling_{i} \mapsto \scaling_{i}'^2 = \scalePBA \frac{\scaling_{i}'^2}{\sigma^2}$, we obtain

\begin{equation} \label{eq:PBA_OPT_p1}
\begin{aligned} 
\inf\limits_{\left \lbrace \scaling_i' \right \rbrace} \quad & \sum\limits_{i=1}^{d} \scalePBA \frac{ \sigma_{i}^{2}}{\scalePBA+\scaling_{i}^{'2} \sigma_{i}^{2}}  + \lambda \cdot \frac{1}{2}\sum\limits_{i=1}^{d} \log\left( 1 + \scaling_i'^2 \sigma_{i}^{2} \right).
\end{aligned}
\end{equation}

Ignoring the constant factor in the objective, perform the change of variable $\scaling'_{i} \mapsto \distortion_i = \frac{\sigma_i^2}{\scalePBA  +\scaling_{i}'^2\sigma_{i}^{2}}$
to obtain
\begin{equation} \label{eq:PBA_REDUCED_OPT}
\begin{aligned} 
\inf\limits_{\left \lbrace \distortion_i \right \rbrace} \quad & \sum\limits_{i=1}^{\ipdim} \distortion_i  + \frac{\lambda}{2}\sum\limits_{i=1}^{d} \log\left( \frac{\sigma_i^2}{\distortion_i} - (\scalePBA -1) \right), \\
    \text{subject to} \quad &\distortion_i \leq \frac{\sigma_i^2}{\scalePBA } \hspace{5mm} \text{ for all } i \in \left[ \ipdim \right].
\end{aligned}
\end{equation}

This optimization problem is nonconvex since the function $\log\left( \frac{\sigma_i^2}{\distortion_i} - (\scalePBA -1) \right)$ is convex for $0 \leq \distortion_i \leq \frac{\sigma_i^2}{2(\scalePBA -1)}$ but concave for $\frac{\sigma_i^2}{2(\scalePBA -1)} < \distortion_i \leq \frac{\sigma_i^2}{\scalePBA}$ and the
latter interval is nonempty since $\scalePBA>2$. 

	Any optimizing $\{\distortion_i\}$ must be a stationary point of 
\begin{equation}\label{eq:LAG_PBA}
\lag\left( \setD, \lambda, \setmu \right) = \sum\limits_{i=1}^{\ipdim} \distortion_i + \lambda\left( \sum\limits_{i=1}^{\ipdim} \log\left( \frac{\sigma_i^2}{\distortion_i} - (\scalePBA -1) \right) \right) + \sum\limits_{i=1}^{\ipdim} \mu_i \left( \distortion_i - \frac{\sigma_i^2}{\scalePBA } \right). 
\end{equation}
for some $\{\mu_i\}_{i = 1}^\ipdim$ with $\mu_i \geq 0$ for all $i \in \left[ \ipdim \right]$ and satisfying the complementary slackness condition~\cite[Prop.~3.3.1]{Bertsekas:Nonlinear}.  The stationary points satisfy, for each $i$
\begin{equation}\label{eq:LAG_STAT}
\frac{\partial \lag}{\partial \distortion_i} = 1 - \lambda\left( \frac{\frac{\sigma_i^2}{\distortion_i^2}}{\left( \frac{\sigma_i^2}{\distortion_i} - (\scalePBA -1) \right)}\right) + \mu_i = 0.
\end{equation}

	Let $\cF = \left \lbrace i : \distortion_i < \frac{\sigma_i^2}{\scalePBA } \right \rbrace$. 
 For $i \in \cF$, $\mu_i = 0$ due to complementary slackness. Substituting in \eqref{eq:LAG_STAT} we obtain a quadratic equation in $\distortion_i$
 
 \[
 (\scalePBA-1)\distortion_i^2 - \sigma_i^2 \distortion_i + \lambda \sigma_i^2 = 0. 
 \]
 which gives 
 $$ \distortion_i = \frac{\sigma_i^2}{2(\scalePBA -1)} \left( 1 \pm \sqrt{1 - \frac{4\lambda (\scalePBA -1)}{\sigma_i^2}} \right).$$

Let $c_i = \sqrt{1 - \frac{4\lambda (\scalePBA -1)}{\sigma_i^2}}$. Note that $\frac{\sigma_i^2}{2(\scalePBA -1)} \left( 1 + c_i \right)$ is always in the concave region and $\frac{\sigma_i^2}{2(\scalePBA -1)} \left( 1 - c_i \right)$ is always in the convex region for a $\lambda$ chosen such that $\distortion_i$ is a real number strictly less than $\frac{\sigma_i^2}{\scalePBA }$. 


Therefore the optimal set of distortions are contained in the following set of $3^{\ipdim}$ points 
$$ \prod_{i = 1}^\ipdim \left\{   \frac{\sigma_i^2}{2(\scalePBA -1)} \left( 1 + \sqrt{1 - \frac{4\lambda (\scalePBA -1)}{\sigma_i^2}} \right),  \frac{\sigma_i^2}{2(\scalePBA -1)} \left( 1 - \sqrt{1 - \frac{4\lambda (\scalePBA -1)}{\sigma_i^2}} \right), \frac{\sigma_i^2}{\scalePBA}\right\}.$$

We now reduce the size of the above set by making a two observations:

%

\noindent \textbf{(1). $\cF$ is contiguous.}

\begin{lemma}\label{lem:NORM_INC}
There exists an optimal $\left \lbrace \distortion_i^* \right \rbrace_{i=1}^{d}$ for \eqref{eq:PBA_REDUCED_OPT} such that (a) $\frac{\sigma_i^2}{\distortion_i^*}$ is a nonincreasing sequence and (b) $\cF = \{ 1,2,\cdots \left \lvert \cF \right \rvert \}$. 
\end{lemma}

\begin{proof}
Substitute $x_i = \frac{\sigma_i^2}{\distortion_i}$ in \eqref{eq:PBA_REDUCED_OPT}. This gives us 
\begin{equation} \label{eq:PBA_NORM_OPT}
\begin{aligned} 
\inf\limits_{\left \lbrace x_i \right \rbrace} \quad & \sum\limits_{i=1}^{\ipdim} \frac{\sigma_i^2}{x_i} + \frac{\lambda}{2}\sum\limits_{i=1}^{d} \log\left( x_i - (\scalePBA -1) \right), \\
    \text{subject to} \quad & x_i \geq \scalePBA \hspace{5mm} \text{ for all } i \in \left[ \ipdim \right].
\end{aligned}
\end{equation}
Let $\{ x_i^* \}_{i=1}^{d}$ be an optimal solution for \eqref{eq:PBA_NORM_OPT}. If, for $i>j$, $x_i^* > x_j^* \ge \scalePBA$, then exchanging the values provides a solution that has the same rate and lower distortion since $\frac{\sigma_i^2}{x_i^*} + \frac{\sigma_j^2}{x_j^*} \geq \frac{\sigma_i^2}{x_j^*} + \frac{\sigma_j^2}{x_i^*}$. This proves (a). Part (b) follows immediately.
%
\end{proof}

\noindent\textbf{(2). No two solutions are concave.}
\begin{lemma}\label{lem:SINGLE_CONCAVE}
For $R>0$, let $\left \lbrace \distortion_i^* \right \rbrace_{i=1}^{\ipdim}$ be an optimal solution for \eqref{eq:PBA_REDUCED_OPT}. There exists at most one $\distortion_i^*$ such that $\frac{\sigma_i^2}{2(\scalePBA -1)} < \distortion_i^* < \frac{\sigma_i^2}{\scalePBA }$.
\end{lemma}

\begin{proof}
Let $\distortion_i^*, \distortion_j^*$ be such that $\frac{\sigma_i^2}{2(\scalePBA -1)} < \distortion_i^* < \frac{\sigma_i^2}{\scalePBA }$ and $\frac{\sigma_j^2}{2(\scalePBA -1)} < \distortion_j^* < \frac{\sigma_j^2}{\scalePBA }$ . Without loss of generality, assume $\distortion_i^* < \distortion_j^*$. Denote the individual rate constraint function by $r\left( \distortion_i \right) \triangleq \log \left( \frac{\sigma_i^2}{\distortion_i} - (\scalePBA -1) \right)$. Since $r$ is concave in $\left(\frac{\sigma_i^2}{2(a-1)}, \frac{\sigma_i^2}{\scalePBA } \right)$, there exist an $\eps > 0$ such that

\begin{align}
r\left(\distortion_i^* - \eps\right) +r\left( \distortion_j^* + \eps \right) &=  r\left( \distortion_i^* \right) - \eps r'\left( \distortion_i^* \right) + O(\eps^2) + r\left(\distortion_j^*\right) + \eps r'\left(\distortion_j^* \right) + O(\eps^2) \\
 &< r\left( \distortion_i^* \right) + r\left(\distortion_j^*\right)
\end{align}
The last inequality follows from concavity of $r$. Therefore,  replacing $\left(\distortion_i^*, \distortion_j^*\right)$ with $\left(\distortion_i^* - \eps, \distortion_j^*+ \eps \right)$, the rate constraint can be improved while keeping the objective in \eqref{eq:PBA_REDUCED_OPT} constant, contradicting the optimality assumption of $\left \lbrace \distortion_i^* \right \rbrace$.
\end{proof}

There is at most one $\distortion_i^*$ such that $\distortion_i^* = \frac{\sigma_i^2}{2(\scalePBA -1)}\left( 1 + c_i\right)$. Assuming such an $i$ exists, $x_i = \frac{2(\scalePBA -1)}{1+c_i} < 2(\scalePBA -1)$. For the convex roots, $x_i = \frac{2(\scalePBA -1)}{1-c_i} > 2(a-1)$. Therefore from Lemma~\ref{lem:NORM_INC}, all the convex roots are contiguous. Therefore, the set of potentially optimal solutions reduces to cardinality $2d$, where each solution is characterized by the number of components that send non-zero rate and whether or not a concave root is sent. PBA, detailed in Algorithm~\ref{alg:PBA} finds the minimum value of the Lagrangian across these $2d$ solutions for a fixed $\lambda$. 
\end{proof}

Note that by sweeping $\lambda > 0$, one can compute the lower convex envelope of the $(D,R)$ curve.
Since every Pareto optimal $(D,R)$ must be a stationary point of (\ref{eq:PBA_OPT_LAGRANGE}),
one can also use Algorithm~\ref{alg:PBA} to compute the $(D,R)$ curve itself by sweeping $\lambda$
and retaining all those stationary points that are not Pareto dominated.

\section{Application to Variable-Rate Compression}\label{sec:VARIABLE}

We have seen that an autoencoder formulation inspired by 
data compression succeeds in providing guaranteed
recovery the principal source components. Conversely, 
a number of successful multimedia compressors have recently
been proposed that are either related to, or directly
inspired by, autoencoders~\cite{TschannenAL18, TodericiVJHMSC17,
BalleLS16, TodericiOHVMBCS16, TheisSCH17, RippelB17, HabibianRTC19,
AgustssonMTCTBG17, BalleMSHJ18, ZhouCGSW18, AgustssonTMTG19, BalleCMSJAHT20}.
In particular, Ball\'{e} \emph{et al.}~\cite{BalleMSHJ18}
show that the objective minimized by their compressor coincides with 
that of variational autoencoders. Following 
\cite{BalleCMSJAHT20}, we refer to this objective as \emph{nonlinear
transform coding (NTC)}. We next use Theorem~\ref{thm:diagisoptimal}
to show that
any minimizer of the NTC objective is guaranteed to recover
the principal source components if (1) the source is Gaussian,
 (2) the transforms are restricted to be linear,
and (3) the entropy model is \emph{factorized}, as explained below.

Let $\data \sim \cN\left(0,\cov\right)$, where $\cov$ is a positive semidefinite covariance matrix. As before, we consider an autoencoder defined by its encoder-decoder pair $(f,g)$, where for $k\leq d$, $f: \RR^{d} \rightarrow \RR^{k}$ and $g:\RR^{k} \rightarrow \RR^{d}$ are chosen from prespecified classes $\cC_f$ and $\cC_g$. The NTC framework
assumes dithered quantization during training, as in Section~\ref{sec:FRAMEWORK} and \cite{AgustssonT20, ChoiEL19}, and seeks to minimize the Lagrangian
\begin{equation}
\inf_{f \in \cC_f,g \in \cC_g} \mathbb{E}_{\data, \noise} \left[ \left \lVert \data - g\left( Q\left( f(\data) + \noise \right) - \noise \right) \right \rVert_2^2 \right] + \lambda H\left(Q \left( f( \data) + \noise\right) - \noise | \noise \right). 
\end{equation}
where $\lambda > 0$ and $\noise$ has i.i.d.\  $\text{Unif}\left[-0.5,0.5\right]$ components. 
NTC assumes variable-length compression, and the quantity
$$
H\left(Q \left( f( \data) + \noise\right) - \noise | \noise \right)
$$
is an accurate estimate of minimum expected codelength length for the discrete
random vector $Q \left( f( \data) + \noise\right)$.
As we noted in Section~\ref{sec:FRAMEWORK}, \cite{ZamirF92} showed that for any random variable $\data$, $Q\left( \data+ \noise \right) - \noise$ and $\data+\noise$ have the same joint distribution with $\data$. They also showed that $H\left(Q \left( \data + \noise\right) - \noise | \noise \right) = I\left( \data + \noise;\data \right) = h(\data + \noise)$, where
$h(\cdot)$ denotes differential entropy. Therefore, the objective can be written as
\begin{equation} \label{eq:general_lag}
 \inf_{f \in \cC_f,g \in \cC_g} \mathbb{E}_{\data, \noise} \left[ \left \lVert \data - g\left( f\left( \data \right) + \noise \right) \right \rVert_2^2 \right]  + \lambda h\left( f\left( \data \right) + \noise\right).
 \end{equation}
 (Compare eq.(13) in \cite{BalleCMSJAHT20}).

We consider the case in which $\cC_f, \cC_g$ are the class of linear functions. Let $\encnew, \decnew$ be $d$-by-$d$ matrices.  Define $f\left(\data\right) = \encnew^{\top} \data $, $g\left( \data \right) = \decnew \data$. Substituting this in the above equation, we obtain
\begin{equation}\label{eq:RD_LAG}
\inf\limits_{\encnew, \decnew} \mathbb{E}_{\data, \noise} \left[ \left \lVert \data - \decnew \left( \encnew^\top \data + \noise\right) \right \rVert_2^2 \right] \\ + \lambda h\left( \encnew^\top \data + \noise\right).
\end{equation}
Since $\decnew$ does not appear in the rate constraint, the optimal $\decnew$ can be chosen to be
the minimum mean squared error estimator of $\data \sim \cN\left( 0,\cov \right)$ given 
$\encnew^\top \data + \noise$, as in Section~\ref{sec:FRAMEWORK}. This gives
\begin{equation}\label{eq:RD_LAG_MMSE}
	\inf\limits_{\encnew} \tr(\cov) - \tr(\cov \encnew \left(\encnew^{\top} \cov \encnew + \frac{\ide}{12} \right)^{-1}\encnew^{\top} \cov ) + \lambda h\left( \encnew^\top \data + \noise\right).
\end{equation}

As noted earlier, the rate term $h\left( \encnew^\top \data + \noise\right)$ is an accurate 
estimate for the minimum expected length of the compressed representation of $Q \left( \encnew^\top \data+ 
\noise\right)$.
This assumes that the different components of this vector are encoded jointly, however. In
practice, one often encodes them separately, relying on the transform $\encnew$ to 
eliminate redundancy among the components. Accordingly, we replace the rate term with
$$
\sum_{i = 1}^d
h\left( \encvec_i^\top \data + [\noise]_i\right),
$$
to arrive at the optimization problem
\begin{equation}\label{eq:RD_LAG_MMSE_FACTOR}
	\inf\limits_{\encnew} \tr(\cov) - \tr(\cov \encnew \left(\encnew^{\top} \cov \encnew + \frac{\ide}{12} \right)^{-1}\encnew^{\top} \cov )  + 
\lambda \cdot \sum_{i = 1}^d
h\left( \encvec_i^\top \data + [\noise]_i\right).
\end{equation}

\begin{theorem}
Suppose $\cov$ has distinct eigenvalues. 
	Then any $\encnew$ that achieves the infimum in~(\ref{eq:RD_LAG_MMSE_FACTOR}) 
 has the property that all of its nonzero rows are eigenvectors of $\cov$.
\label{thm:variable_rate}
\end{theorem}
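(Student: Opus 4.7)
The plan is to reduce this to Theorem~\ref{thm:diagisoptimal} by showing that the rate penalty, viewed as a function of the latent variances, is a strictly Schur-concave constraint of the form already handled. First, I would make the standard Lagrangian-to-constraint observation: if $\encnew^*$ minimizes (\ref{eq:RD_LAG_MMSE_FACTOR}) and $R^* := \sum_i h(\encvec_i^{*\top}\data + [\noise]_i)$, then any $\encnew$ with $\sum_i h(\encvec_i^{\top}\data + [\noise]_i) \leq R^*$ and smaller distortion would strictly improve the Lagrangian value, contradicting optimality. Hence $\encnew^*$ is also optimal for the constrained problem (\ref{eq:MASTEROPTI_MMSE}) with rate budget $R^*$.

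Next, I would exploit Gaussianity of $\data$ to put the rate in single-letter form in the variances. Since $\data \sim \cN(0,\cov)$, each projection $\encvec_i^\top \data$ is Gaussian with variance $\sigma_i^2 := \encvec_i^\top \cov \encvec_i$, and $[\noise]_i \sim \text{Unif}[-\tfrac12,\tfrac12]$ is independent. Define
\[
    \rateproxy_{sl}(\sigma^2) \;:=\; h(Z_{\sigma^2} + U),
    \qquad Z_{\sigma^2} \sim \cN(0,\sigma^2), \; U \sim \text{Unif}[-\tfrac12,\tfrac12], \; Z_{\sigma^2} \independent U.
\]
Then the rate term equals $\sum_{i=1}^d \rateproxy_{sl}(\encvec_i^\top \cov \encvec_i)$, exactly the separable form handled by Proposition~\ref{prop:single-letter} in the appendix.

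The main technical step, and the one I expect to be the crux, is strict concavity of $\rateproxy_{sl}$ in $\sigma^2$. I would invoke de~Bruijn's identity, which gives $\frac{d}{d\sigma^2} h(Z_{\sigma^2} + U) = \tfrac12 J(Z_{\sigma^2}+U)$, where $J$ denotes Fisher information; then the Fisher information inequality (or equivalently Costa's entropy power inequality together with strict positivity of $N$) implies that $J(Z_{\sigma^2} + U)$ is strictly decreasing in $\sigma^2$, since $U$ is not Gaussian. Hence $\rateproxy_{sl}$ is strictly concave. By Proposition~\ref{prop:single-letter} the sum $\sum_i \rateproxy_{sl}(\encvec_i^\top \cov \encvec_i)$ is strictly Schur-concave in $\{\encvec_i^\top \cov \encvec_i\}_{i=1}^d$.

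Finally, I would apply the second (uniqueness) conclusion of Theorem~\ref{thm:diagisoptimal}: since the constraint is strictly Schur-concave and $\cov$ has distinct eigenvalues, every optimal $\encnew^*$ must have all nonzero columns equal to eigenvectors of $\cov$. The obstacles are all mild: the Lagrangian reduction is standard, the Schur-concavity reduction is the Proposition from the appendix, and the only genuinely nontrivial piece is the strict concavity of $\rateproxy_{sl}$, which follows from classical information-theoretic convolution inequalities once one writes entropy as an integral of Fisher information.
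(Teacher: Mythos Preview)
Your proposal is correct and follows essentially the same route as the paper: write the rate term as $\rateproxy_{sl}(\encvec_j^\top \cov \encvec_j)$ using Gaussianity, then establish strict concavity of $\rateproxy_{sl}$ via de~Bruijn's identity together with the Fisher-information convolution inequality (the paper cites Blachman), and finally invoke Theorem~\ref{thm:diagisoptimal}. The only difference is that you make explicit the Lagrangian-to-constraint reduction, which the paper leaves implicit when it says ``By Theorem~\ref{thm:diagisoptimal}, it suffices\ldots''; your version is if anything slightly cleaner in this respect.
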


\begin{proof}
Since the distribution of $\noise$ is fixed, by the Gaussian assumption on $\data$, 
$ h\left( \encvec_j^\top \data + [\noise]_j\right)$ only depends on
	$\encvec_j$ through $\encvec^{\top}_j \cov \encvec_j$. Thus we may write
	\begin{equation}
		h(\encvec_j^{\top} \data + \epsilon)
		    =  \rateproxy_{sl}(\encvec_j^{\top} \cov \encvec_j).
	\end{equation}
	By Theorem~\ref{thm:diagisoptimal}, it suffices to show that $\rateproxy_{sl}(\cdot)$
	is strictly concave.
	Let $Z$ be a standard Normal random variable and let $\epsilon$
      be uniformly distributed over $[-1/2,1/2]$, independent of $Z$. 
         Then we have
	\begin{align}
		\rateproxy_{sl}(s) 
		 & = h(\sqrt{s} \cdot Z + \epsilon).
	\end{align}
	Thus by de Bruijn's identity~\cite{CoverT06},
	\begin{equation}
		\rateproxy'_{sl}(s) = \frac{1}{2} J(\epsilon + \sqrt{s} \cdot Z),
	\end{equation}
	where $J(\cdot)$ is the Fisher information. To show that $\rateproxy'_{sl}(\cdot)$
	is strictly concave, it suffices to show that $J(\epsilon + \sqrt{s} \cdot Z)$ is strictly
	decreasing in $s$.\footnote{If $g'(\cdot)$ is strictly decreasing then for all $t > s$,
	$g(t) = g(s) + \int_s^t g'(u) du < g(s) + g'(s)(t-s)$ and likewise for $t < s$. 
	 That $g(\cdot)$ is strictly
	concave then follows from the standard first-order test for 
	concavity~\cite{BoydVandenberghe}.}
	To this end, let $t > s > 0$ and let $Z_1$ and $Z_2$ 
	  be i.i.d.\ standard Normal random variables, independent of $\epsilon$. Then
	  \begin{equation}
		   J(\epsilon + \sqrt{t} \cdot Z) = J(\epsilon + \sqrt{s}\cdot Z_1 + \sqrt{t-s}\cdot Z_2)
	  \end{equation}
	  and by the convolution inequality for Fisher information~\cite{Blachman65},
	  \begin{align}
\frac{1}{J(\epsilon + \sqrt{s} \cdot Z_1 + \sqrt{t - s} \cdot Z_2)}  >
		       \frac{1}{J(\epsilon + \sqrt{s} \cdot Z_1)} + \frac{1}{J(\sqrt{t-s} \cdot Z_2)}> \frac{1}{J(\epsilon + \sqrt{s} \cdot Z_1)},
	  \end{align}
      where the first inequality is strict because $\epsilon + \sqrt{s} \cdot Z_1$
       is not Gaussian distributed.
\end{proof}

\section{Compression Experiments}\label{sec:EXPERIMENTS}

We validate the PBA algorithm experimentally by comparing the performance
of a PBA-derived fixed-rate compressor against  the performance of baseline fixed-rate
compressors. The code of our implementation can be found at \url{https://github.com/SourbhBh/PBA}. As we noted in the previous section, although variable-rate codes are more commonplace in practice, fixed-rate codes do offer some advantages over their more
general counterparts:
\begin{enumerate}
    \item In applications where a train of source
        realizations are compressed sequentially, fixed-rated 
        coding allows for simple concatenation of the compressed
        representations. Maintaining synchrony between the encoder
        and decoder is simpler than with variable-rate codes.
    \item In applications where a dataset of source realizations
        are individually compressed, fixed-rate coding allows for random access
        of data points from the compressed representation.
    \item In streaming in which a sequence of realizations will 
           be streamed, bandwidth provisioning is 
        simplified when the bit-rate is constant over time.
\end{enumerate}

Fixed-rate compressors exist for specialized sources such as
speech~\cite{McCreeB95, SchroederA85} and audio more 
generally~\cite{Vorbis}. We
consider a general-purpose, learned, fixed-rate compressor
derived from PBA and the following two quantization operations.
The first, $Q_{CD}(a,\sigma^2,U,x)$\footnote{``CD'' stands for 
``clamped dithered.''}
accepts the hyperparameter $a$, a variance estimate $\sigma^2$, a
dither realization $U$, and the scalar source realization to be
compressed, $x$, and outputs (a binary representation of) the
nearest point to $x$ in the set
\begin{equation}
    \label{eq:quantizeset}
    \left\{i + U : i \in \mathbb{Z} \ \text{and} \ i + U 
         \in \left(-\frac{\Gamma}{2},\frac{\Gamma}{2}\right] \right\}\newest{,}
\end{equation}
where
\begin{equation}
    \Gamma = 2^{\lfloor \frac{1}{2} \log_2 (4a^2 \sigma^2 + 1) \rfloor }.
\end{equation}
This evidently requires $\log_2 \Gamma$ bits.  The second
function, $Q_{CD}'(a^2,\sigma^2,U,b)$, where $b$ is a binary
string of length $\log_2 \Gamma$, maps the binary
representation $b$ to the point in~(\ref{eq:quantizeset}).
These quantization routines are applied separately to each
latent component. The $\sigma^2$ parameters are
determined during training. The dither $U$ is chosen
uniformly over the set $[-1/2,1/2]$, independently for
each component. We assume that $U$ is chosen pseudorandomly
from a fixed seed that is known to both the encoder and
the decoder. As such, it does not need to be explicitly 
communicated. For our experiments, we fix the $a$ parameter
at $15$ and hard code this both at the encoder and at
the decoder. We found that this choice balances the 
dual goals of minimizing the excess distortion due to 
the clamping quantized points to the interval $(\Gamma/2,\Gamma/2]$ 
and minimizing the rate.

PBA compression proceeds by applying Algorithm~\ref{alg:PBA}
to a training set to determine the matrices $\encnew$ and 
$\decnew$. The variance estimates $\sigma_1^2,\ldots,\sigma_d^2$
for the $d$ latent variances are chosen as the empirical 
variances on the training set and are hard-coded in the encoder
and decoder. Given a data point $\data$,
the encoded representation is the concatenation of the bit strings
$b_1, \ldots, b_d$, where
$$
b_i = Q_{CD}(a^2,\sigma^2_i,U_i,\encvec_i^\top x), 
$$ 
The decoder parses the received bits into $b_1, \ldots, b_d$. 
and computes the latent reconstruction $\hat\latent$, where
$$
\hat\latent_i = Q_{CD}'(a^2,\sigma^2_i,U_i,b_i), 
$$
The reconstruction is then $\decnew\hat\latent$. 

We evaluate the PBA compressor on MNIST \cite{LecunBBH98}, CIFAR-10 \cite{Krizhevsky09}, MIT Faces Dataset, Free Spoken Digit Dataset (FSDD) \cite{Zohar} and a synthetic Gaussian dataset. The synthetic Gaussian dataset is generated from a diagonal covariance matrix obtained from the eigenvalues of the Faces Dataset. 
We compare our algorithms primarily using mean-squared error since our theoretical analysis uses mean squared error as the distortion metric. Our plots display Signal-to-Noise ratios (SNRs) for ease of interpretation. For image datasets, we also compare our algorithms using the Structural Similarity (SSIM) or the Multi-scale Strctural Similarity (MS-SSIM) metrics when applicable \cite{WangBSS04}. We also consider errors on downstream tasks, specifically classification, as a distortion measure.

For all datasets, we compare the performance of the PBA compressor 
against baseline scheme derived from PCA that uses $Q_{CD}$
and $Q_{CD}'$.
The PCA-based scheme sends some of the principal components 
essentially losslessly, and no information about the others.
Specifically, in the context of our framework,
for any given $k$, we choose the first $k$ columns
of $\encnew$ to be aligned with the first $k$ principal components
of the dataset; the remaining columns are zero. Each nonzero column
is scaled such that its Euclidean length multiplied by the eigenvalue has all the significant digits. This is done so that at high rates, the quantization procedure sends the $k$ principal components losslessly. The quantization
and decoder operations are as in the PBA-based scheme; in particular
the $a^2$ parameter is as specified above. By varying $k$, we trade off rate and distortion. 





\subsection{SNR Performance}

\begin{figure}[hbt]
\centering
\minipage{0.18\textwidth}
  \includegraphics[width=\linewidth]{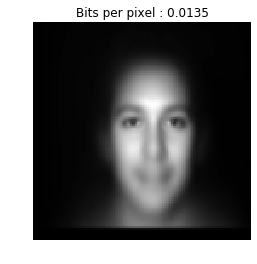}
\endminipage\hfill
\minipage{0.18\textwidth}
  \includegraphics[width=\linewidth]{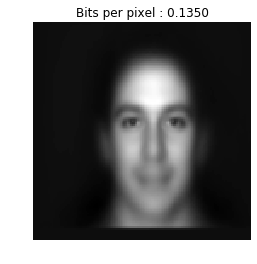}
\endminipage\hfill
\minipage{0.18\textwidth}
  \includegraphics[width=\linewidth]{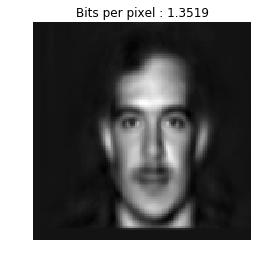}
\endminipage\hfill
\minipage{0.18\textwidth}
  \includegraphics[width=\linewidth]{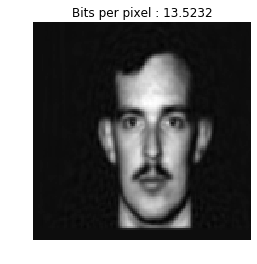}
\endminipage\hfill
\minipage{0.18\textwidth}
  \includegraphics[width=\linewidth]{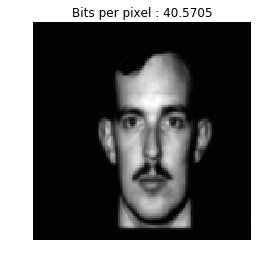}
\endminipage\hfill 
\minipage{0.18\textwidth}
  \includegraphics[width=\linewidth]{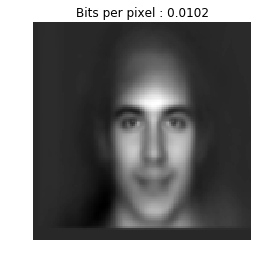}
\endminipage\hfill
\minipage{0.18\textwidth}
  \includegraphics[width=\linewidth]{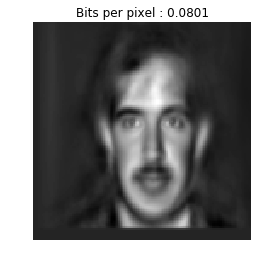}
\endminipage\hfill
\minipage{0.18\textwidth}
  \includegraphics[width=\linewidth]{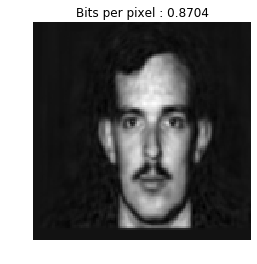}
\endminipage\hfill
\minipage{0.18\textwidth}
  \includegraphics[width=\linewidth]{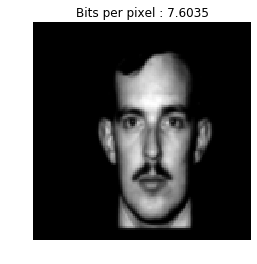}
\endminipage\hfill
\minipage{0.18\textwidth}
  \includegraphics[width=\linewidth]{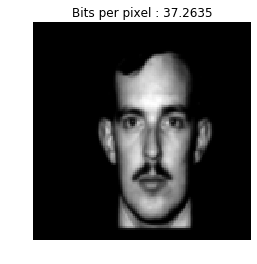}
\endminipage\hfill
\caption{Reconstructions at different bits/pixel values for PCA (top) and PBA (bottom)}\label{fig:reconstructions_pca_pba}
\end{figure}

\begin{figure}[hbt]
\centering
\minipage{0.24\textwidth}
  \includegraphics[width=\linewidth]{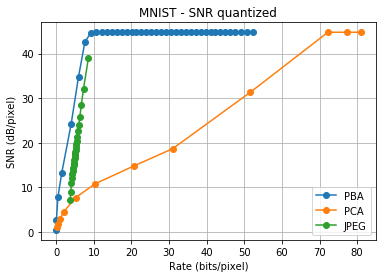}
\endminipage\hfill
\minipage{0.24\textwidth}
  \includegraphics[width=\linewidth]{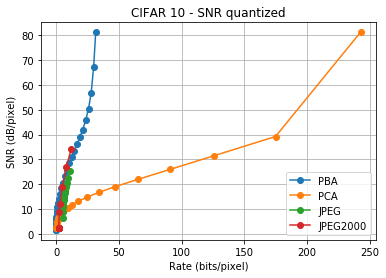}
\endminipage\hfill
\minipage{0.24\textwidth}%
  \includegraphics[width=\linewidth]{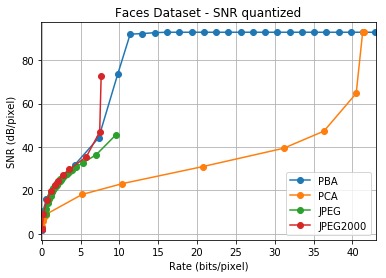}
\endminipage \hfill
\minipage{0.24\textwidth}%
  \includegraphics[width=\linewidth]{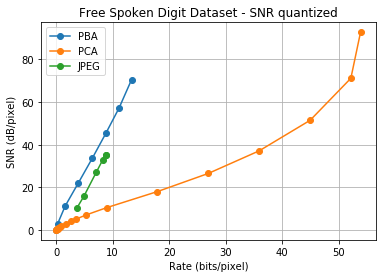}
\endminipage \hfill
\minipage{0.24\textwidth}
  \includegraphics[width=\linewidth]{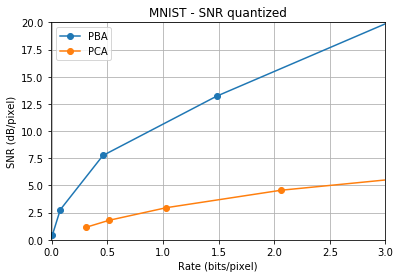}
\endminipage\hfill
\minipage{0.24\textwidth}
  \includegraphics[width=\linewidth]{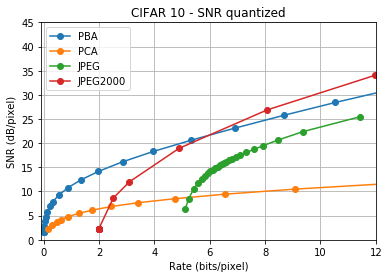}
\endminipage\hfill
\minipage{0.24\textwidth}%
  \includegraphics[width=\linewidth]{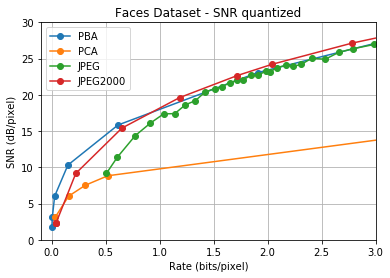}
  \endminipage \hfill
  \minipage{0.24\textwidth}%
  \includegraphics[width=\linewidth]{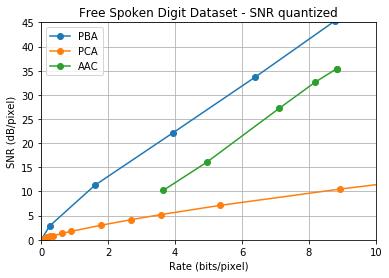}

\endminipage
\caption{SNR/pixel vs Rate (bits/pixel) for MNIST, CIFAR-10, Faces, FSDD datasets. Figures in the bottom row are zoomed-in. } \label{fig:snr}
   
\end{figure}

\begin{figure}[hbt]
\centering
\minipage{0.24\textwidth}
  \includegraphics[width=\linewidth]{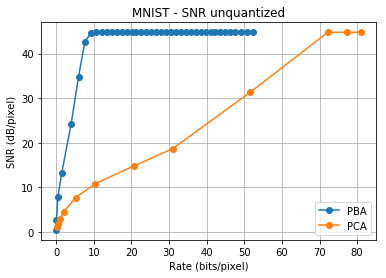}
\endminipage\hfill
\minipage{0.24\textwidth}
  \includegraphics[width=\linewidth]{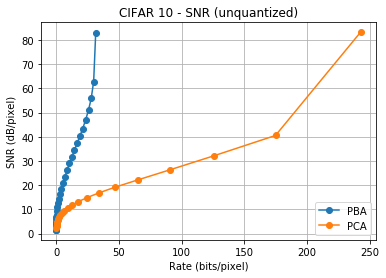}
\endminipage\hfill
\minipage{0.24\textwidth}%
  \includegraphics[width=\linewidth]{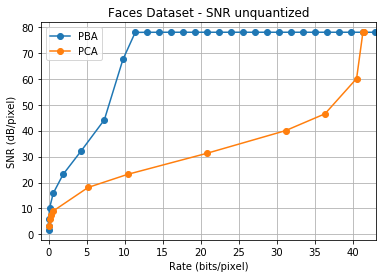}
  
\endminipage
\minipage{0.24\textwidth}%
  \includegraphics[width=\linewidth]{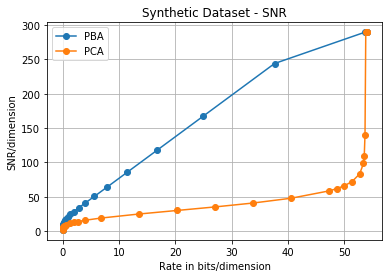}
  
\endminipage

\minipage{0.24\textwidth}
  \includegraphics[width=\linewidth]{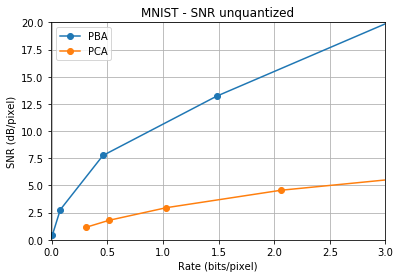}
\endminipage\hfill
\minipage{0.24\textwidth}
  \includegraphics[width=\linewidth]{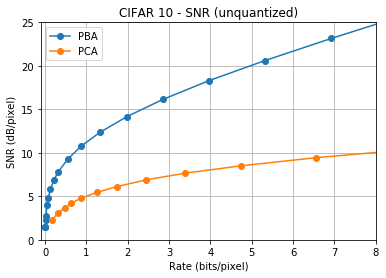}
\endminipage\hfill
\minipage{0.24\textwidth}%
  \includegraphics[width=\linewidth]{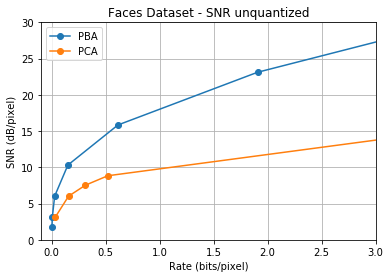}
  
\endminipage
\minipage{0.24\textwidth}%
  \includegraphics[width=\linewidth]{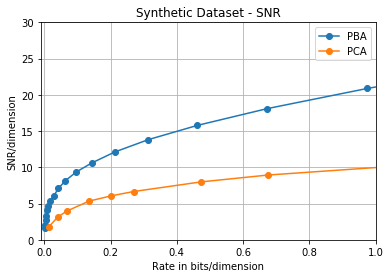}
  
\endminipage

\caption{SNR/pixel vs Rate (bits/pixel) for MNIST, CIFAR-10, Faces and Synthetic dataset. Reconstructions are not rounded to integers from $0$ to $255$. The bottom four plots are zoomed-in versions of the top four plots.}
\label{fig:nq}
\end{figure}

\begin{figure}[hbt]
\centering
\minipage{0.24\textwidth}
  \includegraphics[width=\linewidth]{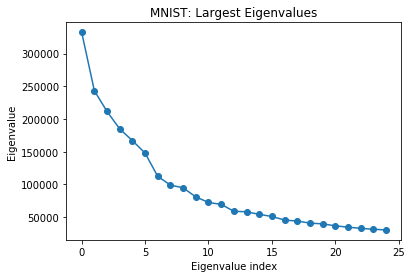}
\endminipage\hfill
\minipage{0.24\textwidth}
  \includegraphics[width=\linewidth]{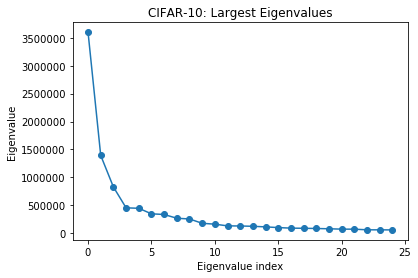}
\endminipage\hfill
\minipage{0.24\textwidth}%
  \includegraphics[width=\linewidth]{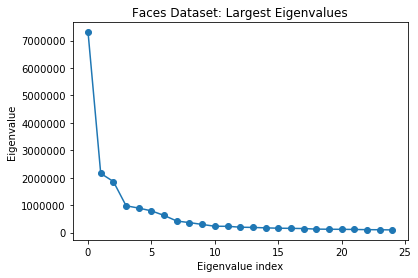}
\endminipage\hfill
\minipage{0.24\textwidth}%
  \includegraphics[width=\linewidth]{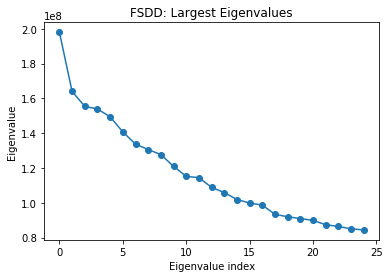}
  \endminipage

\minipage{0.24\textwidth}
  \includegraphics[width=\linewidth]{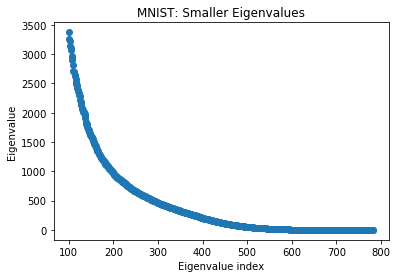}
\endminipage\hfill
\minipage{0.24\textwidth}
  \includegraphics[width=\linewidth]{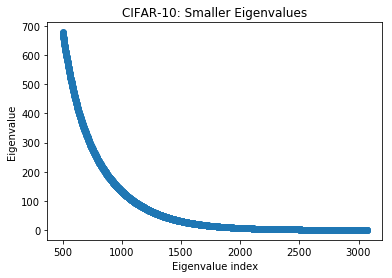}
\endminipage\hfill
\minipage{0.24\textwidth}%
  \includegraphics[width=\linewidth]{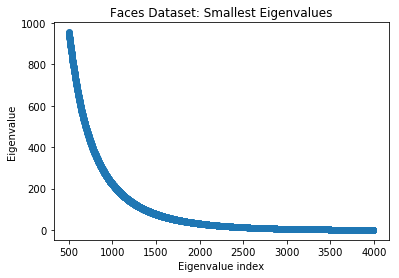}
\endminipage\hfill
\minipage{0.24\textwidth}%
  \includegraphics[width=\linewidth]{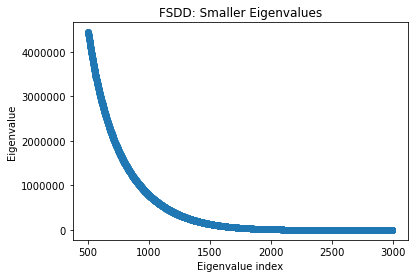}
  \endminipage
\caption{Eigenvalue distribution of the datasets. The top three plots are the largest 25 eigenvalues for MNIST, CIFAR-10, Faces and FSDD dataset. The bottom four figures plot the remaining eigenvalues except the largest 500. }
\label{fig:eig}
\end{figure}

\begin{figure}[hbt]
\centering
\minipage{0.32\textwidth}
  \includegraphics[width=\linewidth]{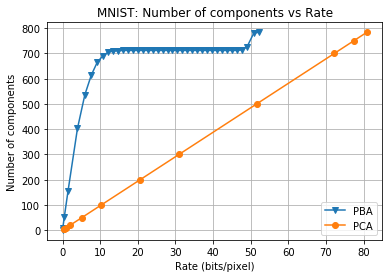}
\endminipage\hfill
\minipage{0.32\textwidth}
  \includegraphics[width=\linewidth]{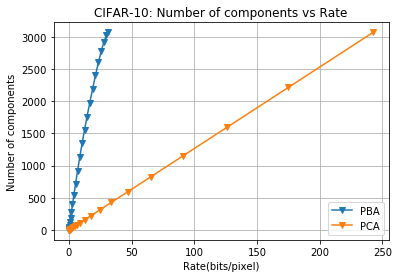}
\endminipage\hfill
\minipage{0.32\textwidth}%
  \includegraphics[width=\linewidth]{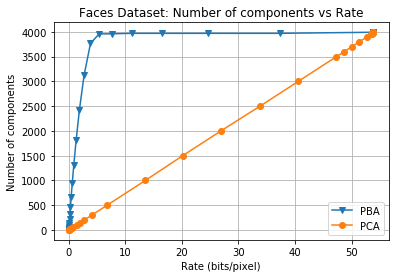}
\endminipage
\caption{Plots of number of components sent vs rate (bits/pixel) for PBA and PCA. }\label{fig:num_comp}
\end{figure}

We begin by examining compression performance under
mean squared error, or equivalently, the SNR, defined as
$$
\text{SNR} = 10 \cdot \log_{10}\left( 
                \frac{P}{\text{MSE}}\right).
$$
where $P$ is the empirical second moment of the dataset.
This was the objective that PBA (and PCA) is designed to minimize.


In Figure~\ref{fig:reconstructions_pca_pba}, we display reconstructions 
for a particular image in the Faces Dataset under PBA and PCA.
Figure~\ref{fig:snr} shows the tradeoff for PBA and PCA against
JPEG and JPEG2000 (for the image datasets) and 
AAC (for the audio dataset). All of the image datasets
have integer pixel values between 0 and 255. Accordingly, we round
the reconstuctions of PBA and PCA to the nearest integer in this range. Figure~\ref{fig:nq} shows the same tradeoff for PBA and PCA when reconstructions are not rounded off to the nearest integer.
We see that PBA consistently
outperforms PCA and JPEG, and is competitive with JPEG2000, even 
though the JPEG and JPEG2000 are variable-rate.  \footnote{It should be 
noted, however, that JPEG and JPEG2000 aim to minimize subjective
distortion, not MSE, and they 
do not allow for training on sample images, as PBA and PCA do. A
similar caveat applies to AAC.} We estimate the size of the JPEG header by compressing an empty image and subtract this estimate from all the compression sizes produced by JPEG. We do not plot JPEG2000 performance for MNIST since it requires at least a 32x32 image. For audio data, we observe that PBA consistently outperforms PCA and AAC.
Since the image data all use $8$ bits per pixel, one can 
obtain infinite SNR at this rate via the trivial encoding that
communicates the raw bits. PCA and PBA do not find this solution
because they quantize in the transform domain, where the
lattice-nature of the pixel distribution is not apparent.
Determining how to leverage lattice structure in the source
distribution for purposes of compression is an interesting
question that transcends the PBA and PCA algorithms and that
we will not pursue here.

The reason that PCA performs poorly is that it 
favors sending the less significant bits of the most significant 
components over the most significant bits of less significant 
components, when the latter are more valuable for reconstructing
the source.  Arguably, it does not identify the ``principal bits.''
Figure~\ref{fig:eig} shows the eigenvalue distribution of the 
different datasets, and Figure~\ref{fig:num_comp} shows the number of distinct
components about which information is sent as a function of rate
for both PBA and PCA. We see that PBA sends information about many 
more components for a given rate than does PCA. We discuss the
ramifications of this for downstream tasks, such as classification,
in Section~\ref{subsec:downstream}.

\subsection{SSIM Performance}

Structural similarity (SSIM) and Multi-Scale Structural similarity (MS-SSIM) are metrics that are tuned to perceptual similarity. Given two images, the SSIM metric outputs a real value between $0$ and $1$ where a higher value indicates more similarity between the images. We evaluate the performance of our algorithms on these metrics as well in Figure~\ref{fig:ssim}. We see that PBA
consistently dominates PCA, and although it was not optimized for
this metric, beats JPEG at low rates as well.

\begin{figure}[hbt]
\centering
\minipage{0.32\textwidth}
  \includegraphics[width=\linewidth]{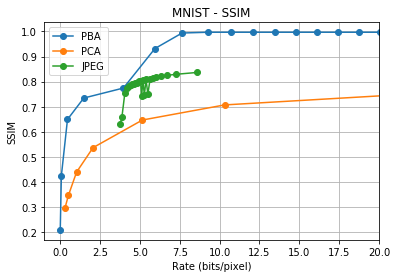}
\endminipage\hfill
\minipage{0.32\textwidth}
  \includegraphics[width=\linewidth]{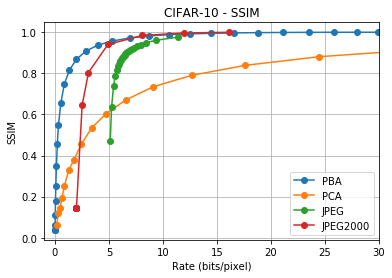}
\endminipage\hfill
\minipage{0.32\textwidth}%
  \includegraphics[width=\linewidth]{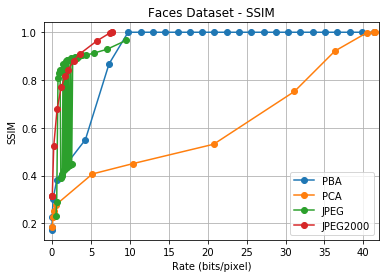}
\endminipage
\caption{SSIM vs Rate (bits/pixel) for MNIST, CIFAR-10, Faces Dataset}\label{fig:ssim}
\end{figure}

\subsection{Performance on Downstream tasks}
\label{subsec:downstream}
Lastly, we compare the impact of using PBA and PCA on an important downstream task, namely classification. We evaluate the algorithms on MNIST and CIFAR-10 datasets and use neural networks for classification. Our hyperparameter and architecture choices are given in Table~\ref{tb:ARCHITECTURE}. We divide the dataset into three parts. From the first part, we obain the covariance matrix that we use for PCA and to obtain the PBA compressor. The second and third part are used as training and testing data for the purpose of classification. For a fixed rate, reconstructions are passed to the neural networks for training and testing respectively. Since our goal is to compare classification accuracy across the compressors, we fix both the architecture and hyperparameters, and do not perform any additional tuning for the separate algorithms.

Figure~\ref{fig:accuracy} shows that PBA outperforms PCA in terms of accuracy. The difference is especially significant for low rates; all algorithms attain roughly the same performance at higher rates.


\begin{table}[hbt]
\centering

\begin{tabular}[t]{lccc}
\hline
Hyperparameter&MNIST&CIFAR-10\\
\hline
Architecture & 2-layer fully connected NN & Convolutional Neural Network \\ & & with 2 convolutional layers, pooling and \\ & & three fully connected layers \\
\# Hidden Neurons&100& NA\\
Optimization Algorithm&Adam& SGD with momentum\\
Loss&Cross-entropy&Cross-entropy\\
Learning Rate&0.0005&0.01\\
\hline
\end{tabular}
\caption{Hyperparameter Choices and Architecture for Classification}\label{tb:ARCHITECTURE}
\end{table}%

\begin{figure}[hbt]
\centering
\minipage{0.45\textwidth}
  \includegraphics[width=\linewidth]{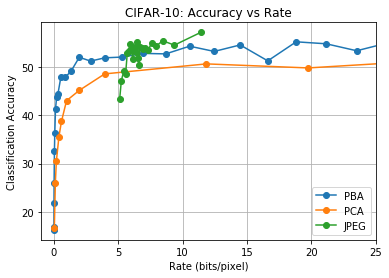}
\endminipage\hfill
\minipage{0.45\textwidth}
  \includegraphics[width=\linewidth]{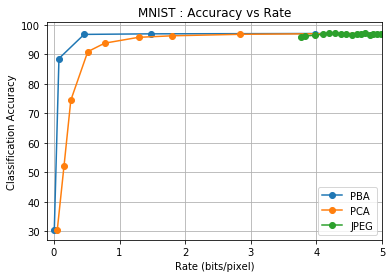}
\endminipage
\caption{Accuracy vs Rate (bits/pixel) for MNIST, CIFAR-10}\label{fig:accuracy}
\end{figure}

\clearpage
  \bibliographystyle{alpha}
  \bibliography{biblio.bib}
  \appendix
  \section{Review of Schur-Convexity}\label{app:schur}
    
  In this section, we review the key definitions and theorems related to Schur-convexity that we use in the proof of Theorem~\ref{thm:diagisoptimal}. 
  
  \begin{definition}{(\textbf{Majorization})}\cite{HornJ85}
For a vector $\bm{v} \in \RR^d$, let $\bm{v}^{\downarrow}$ denote the vector with the same components arranged in descending order. Given vectors $\bm{a}, \bm{b} \in \RR^{\ipdim}$, we say $\bm{a}$ majorizes $\bm{b}$ and denote $\bm{a} \succ \bm{b}$, if 
\[\sum\limits_{i=1}^{d} \left[\bm{a} \right]_i = \sum\limits_{i=1}^{d} \left[\bm{b} \right]_i,  \]
and for  all $k \in \left[ d-1 \right]$,
\[ \sum\limits_{i=1}^{k} \left[\bm{a}^{\downarrow} \right]_i \geq \sum\limits_{i=1}^{k} \left[\bm{b}^{\downarrow} \right]_i.\]

  \end{definition}

\begin{definition}{(\textbf{Schur-convexity})}
A function $f: \RR^{\ipdim} \rightarrow \RR$ is Schur-convex if for any vectors $\bm{a}, \bm{b} \in \RR^d$, such that $\bm{a} \succ \bm{b}$, 
\[ f\left( \bm{a} \right) \geq f\left( \bm{b} \right). \]
$f$ is strictly Schur-convex if the above inequality is a strict inequality for any $\bm{a} \succ \bm{b}$ that are not permutations of each other. $f$ is Schur-concave if the direction of the inequality is reversed and is strictly Schur concave if the direction of the inequality is reversed and it is a strict inequality for any $\bm{a} \succ \bm{b}$ that are not permutations of each other.
\end{definition}

\begin{proposition}\cite{MarshallOA11}\label{prop:single-letter}
If $f:\RR \rightarrow \RR$ is convex, then $\phi:\RR^d \rightarrow \RR$ given by
\[ \phi\left( \bm{v} \right) = \sum\limits_{i=1}^{d} f\left(\left[v \right]_i \right) \] 
is Schur-convex. If $f$ is concave, then $\phi$ is Schur-concave. Likewise if $f$ is strictly convex, $\phi$ is strictly Schur-convex and if $f$ is strictly concave, $\phi$ is strictly Schur-concave.
\end{proposition}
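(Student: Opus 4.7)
The plan is to reduce the claim to Jensen's inequality via the Hardy--Littlewood--Polya (HLP) characterization of majorization. Recall that $\bm{a} \succ \bm{b}$ if and only if $\bm{b}$ can be produced from $\bm{a}$ by a finite sequence of \emph{T-transforms}, where each T-transform fixes all coordinates except two indices $i \neq j$ and replaces the pair $(a_i, a_j)$ with
$$
\bigl(\lambda a_i + (1-\lambda) a_j,\; (1-\lambda) a_i + \lambda a_j\bigr)
$$
for some $\lambda \in [0,1]$. Because $\phi(\bm{v}) = \sum_i f([v]_i)$ is a symmetric function of its arguments, its value depends only on the multiset of coordinates, so it suffices to track how $\phi$ changes under a single T-transform and then telescope.

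Holding all other coordinates fixed, the change in $\phi$ under one T-transform equals
$$
\Delta = f(\lambda a_i + (1-\lambda) a_j) + f((1-\lambda) a_i + \lambda a_j) - f(a_i) - f(a_j).
$$
When $f$ is convex, two applications of Jensen's inequality bound the first two summands above by $\lambda f(a_i) + (1-\lambda) f(a_j)$ and $(1-\lambda) f(a_i) + \lambda f(a_j)$, whose sum is $f(a_i) + f(a_j)$; hence $\Delta \le 0$. Summing over the HLP sequence yields $\phi(\bm{a}) \ge \phi(\bm{b})$, which is Schur-convexity. The concave case is identical with all inequalities reversed, establishing Schur-concavity.

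For the strict statements, I would argue that when $\bm{a} \succ \bm{b}$ and $\bm{a}$ is not a permutation of $\bm{b}$, at least one T-transform in the HLP sequence must be genuinely mixing, i.e., must have $a_i \neq a_j$ and $\lambda \in (0,1)$; otherwise every step reduces to a coordinate swap and the composition would force $\bm{b}$ to be a permutation of $\bm{a}$, contradicting the hypothesis. At such a mixing step, strict convexity of $f$ makes both Jensen bounds strict, so $\Delta < 0$ for that step while every other step contributes $\Delta \le 0$; summing gives $\phi(\bm{a}) > \phi(\bm{b})$. The strict concave case is symmetric. The main obstacle is invoking the HLP decomposition carefully enough to guarantee the existence of a genuinely mixing step; once that existence is in hand, the rest is a one-line application of Jensen.
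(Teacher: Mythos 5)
Your proof is correct. The paper itself gives no proof of this proposition, citing Marshall--Olkin--Arnold, and your route---reducing majorization to a finite sequence of T-transforms via the Hardy--Littlewood--P\'olya characterization and then applying Jensen's inequality coordinate-pairwise---is essentially the standard argument in that reference; your handling of strictness (any sequence with no genuinely mixing step preserves the multiset of coordinates, so non-permutation pairs force at least one strict Jensen step) is also sound. The only caveat is that the HLP T-transform decomposition you invoke is itself a nontrivial classical lemma, so your argument is complete only modulo that standard result.
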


\end{document}